\documentclass[a4paper,11pt]{amsart} 
\usepackage{amsmath,amsxtra,amssymb,latexsym, amscd,amsthm}
\usepackage[dvipsnames]{xcolor}
\usepackage[mathscr]{eucal}
\usepackage{mathrsfs}
\usepackage{bbm}
\usepackage{enumerate}
\usepackage{pict2e}
\usepackage{tikz}
\usepackage{graphicx}
\usepackage[a4paper]{geometry}
\usepackage{color}

\usepackage{comment}
\geometry{left=3cm,right=3cm,top=2.5cm}
\usepackage{hyperref}
\numberwithin{equation}{section}

\theoremstyle{plain}
\newtheorem{thm}{Theorem}[section]

\newtheorem{prp}[thm]{Proposition}

\newtheorem{lem}[thm]{Lemma}
\newtheorem*{euc*}{Euclidean division}
\newtheorem*{fek*}{Fekete's Lemma}
\newtheorem*{kin*}{Kingman's Subadditive Ergodic Theorem}
\newtheorem*{fur*}{Furstenberg-Kesten Theorem}
\newtheorem*{ego*}{Egorov's Theorem}
\theoremstyle{definition}

\newtheorem{rem}[thm]{Remark}

\newtheorem*{rem*}{Remark}

%
\newcommand{\dd}{\mathrm{d}}
\newcommand{\ee}{\mathrm{e}}
\newcommand{\ii}{\mathrm{i}}
\renewcommand{\Im}{\operatorname{Im}}
\renewcommand{\Re}{\operatorname{Re}}


\newcommand{\N}{\mathbb{N}}
\newcommand{\Z}{\mathbb{Z}}

\newcommand{\R}{\mathbb{R}}
\newcommand{\C}{\mathbb{C}}

\newcommand{\T}{\mathbb{T}}

\newcommand{\LL}{\mathbb{L}}

\newcommand{\cN}{\mathcal{N}}
\newcommand{\cA}{\mathcal{A}}
\newcommand{\cC}{\mathcal{C}}

\newcommand{\fa}{\mathfrak{a}}
\newcommand{\fb}{\mathfrak{b}}



\DeclareMathOperator{\sgn}{sgn}

\DeclareMathOperator{\opn}{Op_N}

\DeclareSymbolFont{extraup}{U}{zavm}{m}{n}
\DeclareMathSymbol{\varheart}{\mathalpha}{extraup}{86}
\DeclareMathSymbol{\vardiamond}{\mathalpha}{extraup}{87}

\title{Quantum ergodicity for periodic graphs}
\author{Theo Mckenzie, Mostafa Sabri}
\address{Department of Mathematics, Harvard University, Cambridge, USA.}
\email{tmckenzie@math.harvard.edu}
\address{Department of Mathematics, Faculty of Science, Cairo University, Giza 12613, Egypt.}
\address{Science Division, New York University Abu Dhabi, Saadiyat Island, Abu Dhabi, UAE.}
\email{mostafa.sabri@nyu.edu}
\subjclass[2020]{Primary 58J51. Secondary 39A12}
\keywords{Quantum ergodicity, periodic graphs, periodic Schr\"odinger operators.}
\usepackage{calc}
\usepackage{graphicx}

\makeatletter
\newlength{\temp@wc@width}
\newlength{\temp@wc@height}
\newcommand{\widecheck}[1]{%
  \setlength{\temp@wc@width}{\widthof{$#1$}}%
  \setlength{\temp@wc@height}{\heightof{$#1$}}%
  #1\hspace{-\temp@wc@width}%
  \raisebox{\temp@wc@height+2pt}[\heightof{$\widehat{#1}$}]%
     {\rotatebox[origin=c]{180}{\vbox to 0pt{\hbox{$\widehat{\hphantom{#1}}$}}}}%
}
\makeatother


\begin{document}

\begin{abstract}
We prove quantum ergodicity for a family of periodic Schr\"odinger operators $H$ on periodic graphs. This means that most eigenfunctions of $H$ on large finite periodic graphs are equidistributed in some sense, hence delocalized. Our results cover the adjacency matrix on $\Z^d$, the triangular lattice, the honeycomb lattice, Cartesian products and periodic Schr\"odinger operators on $\Z^d$. The theorem applies more generally to any periodic Schr\"odinger operator satisfying an assumption on the Floquet eigenvalues.
\end{abstract}

\maketitle

\section{Introduction}

Our aim in this article is to prove quantum ergodicity for large periodic graphs. If $\Gamma_N$ is a sequence of finite graphs ``converging''\footnote{The pertinent convergence here is Benjamini-Schramm, but in the present context, we can just take $N$-balls around the origin and have $N\to\infty$.} to some infinite graph $\Gamma$, and if we study a Schr\"odinger operator $H_N = \cA_N + Q_N$ on $\ell^2(\Gamma_N)$, then quantum ergodicity is a \emph{delocalization} criterion stating that, in a weak sense, most eigenvectors of $H_N$ are equidistributed on the graph $\Gamma_N$. 

The terminology comes from \cite{CdV,Shni,Zel}, where the ergodicity of the geodesic flow on a compact manifold $M$ of unit volume (meaning the classical particle's free motion covers the manifold uniformly) is shown to imply a quantum counterpart of ergodicity, namely, the Laplacian wavefunctions $\psi_\lambda$ are equally likely to be anywhere on $M$ (more precisely $|\psi_{\lambda}(x)|^2\,\dd \mathrm{Vol}(x)$ approaches the uniform measure $\dd \mathrm{Vol}(x)$, when $\lambda$ gets large).

Quantum ergodicity for large regular graphs with few cycles was first proved in \cite{ALM}, for the adjacency matrix $H_N=\cA_N$. In this case the limiting graph $\Gamma$ was the $(q+1)$-regular tree $\T_q$. The general case where $\Gamma$ is an infinite tree which is not necessarily regular and $H_N=\cA_N+Q_N$ was later established in \cite{AS}, assuming $H_\Gamma$ has absolutely continuous spectrum. This includes regimes of the Anderson model \cite{AS2}, as well as ``periodic trees with periodic potentials'', more precisely universal covers of finite graphs \cite{AS3}.

The previous results also required the $\Gamma_N$ to be expanders. It was shown in \cite{Theo} that counterexamples exist if expansion is dropped. Examples of regular expander graphs violating quantum ergodicity were also constructed in \cite{Theo}; there $\Gamma$ was no longer a tree. However, it remained open whether more specific families of graphs satisfy quantum ergodicity.

In this paper we show that quantum ergodicity is in fact satisfied for a large family of non-tree graphs $\Gamma$, namely graphs which are periodic with respect to a basis of $\Z^d$. The simplest example is the adjacency matrix on $\Z^d$, but the results also apply to some classes of Schr\"odinger operators with periodic potentials on various lattices. These graphs do not satisfy the expansion or tree properties of previous proofs. Therefore we need new, different techniques to solve the problem in this case. 

On $\Z^d$, if we consider the eigenbasis $e_r^{(N)}(k)=\frac{1}{N^{d/2}}\ee^{2\pi\ii k\cdot r/N}$ for the adjacency matrix on a sub-cube $\Gamma_N$ of sidelength $N$ with periodic boundary conditions, we note that $|e_r^{(N)}(k)|^2 = \frac{1}{N^d}$ is perfectly uniformly distributed on $\Gamma_N$. Similarly, for a periodic Schr\"odinger operator $H$ on a periodic lattice, the Bloch theorem ensures that for any $\lambda\in \sigma(H_N)$, we can find an eigenfunction $\Psi_\lambda$ such that $|\Psi_{\lambda}(x)|^2$ is a periodic function (see \S\ref{sec:bloch} for a discussion and a proof of this result in our context). Here we study whether such delocalization is satisfied \emph{for any eigenbasis} of the Schr\"odinger operator.

By virtue of their homogeneity, it is quite intuitive to expect delocalization on periodic lattices. Indeed, the spectrum is generally absolutely continuous, though flat bands (infinitely degenerate eigenvalues) can appear \cite{KorSa}. The dynamics is also ballistic \cite{BoSa}, meaning the waves spread at maximum speed with time. Here we show that from a spatial point of view, the behavior is quite rich~:
\begin{itemize}
\item There is a simple family of periodic graphs which is quantum ergodic, i.e. the probability measure $\sum_{x\in \Gamma_N} |\psi_u^{(N)}(x)|^2\delta_x$ is close to the uniform measure $\frac{1}{|\Gamma_N|}\sum_{x\in \Gamma_N}\delta_x$, for most $u$. See Theorem~\ref{thm:mainnu1}. Here $(\psi_u^{(N)})_u$ is an orthonormal eigenbasis.
\item In another class of periodic Schr\"odinger operators, we have partial quantum ergodicity, in the sense that we no longer have $|\psi_u^{(N)}(x)|^2\approx \frac{1}{|\Gamma_N|}$, but the sum of $|\psi_u^{(N)}(x)|^2$ over any periodic block is approximately the same (Theorem~\ref{thm:maingen}, Proposition~\ref{prp:cart}). This means that $\psi_u^{(N)}$ does not favor any particular block, but the mass of $\psi_u^{(N)}$ may not be uniform within the block.
\item In other classes of periodic graphs, quantum ergodicity fails completely (\S\ref{sec:deco} and \S\ref{sec:tenstro}).
\end{itemize}

Examples of these three types are $\cA$ on $\Z^d$, on an infinite cylinder (Fig.~\ref{fig:cyl}), and on the graph in Figure~\ref{fig:box}, respectively.

\subsection{Main results}
Let $\Gamma$ be a connected, locally finite graph in some Euclidean space. We assume $\Gamma$ is invariant under translations of some linearly independent vectors $\fa_1,\dots,\fa_d$.

If $\cC_{\fa} = \{x_1\fa_1+\dots+x_d \fa_d:x\in [0,1)^d\}$ is the basic cell, then
\[
V_f = V\cap \cC_{\fa} = \{v_1,\dots,v_\nu\}
\]
is the unit crystal, containing $\nu$ vertices. Given $x = (x_1,\dots,x_d)\in \R^d$, denote $x_{\fa} = \sum_{i=1}^d x_i \fa_i$. The graph $\Gamma$ then consists of periodic $V_f$ blocks of size $\nu$, in the sense that
\begin{equation}\label{e:sumlat}
V = \Z_\fa^d + V_f\,,
\end{equation}
where $\Z_\fa^d = \{n_\fa : n\in \Z^d\}$. Any $u\in V$ thus takes the form $u = \lfloor u\rfloor_{\fa}+\{u\}_{\fa}$, where $\lfloor u\rfloor_{\fa}\in\Z_\fa^d$ and $\{u\}_{\fa}\in V_f$ represent the integer and fractional parts of $u$, respectively. 

In the case of the simple lattice $V = \Z^d$ we have $\fa_j = \mathfrak{e}_j$ the standard basis and $V_f = \{0\}$. In general one can view \eqref{e:sumlat} as expressing the vertex set $V$ as $\nu$ copies of the sub-lattices $\Z_{\fa}^d$, shifted by vertices $v_n\in V_f$.

We consider a Schr\"odinger operator $H = \cA + Q$ on $\Gamma$, where $\cA$ is the adjacency matrix and $Q$ satisfies
\[
Q(v+\fa_i) = Q(v)
\]
for $v\in\Gamma$ and $i=1,\dots,d$. The potential $Q$ is thus periodic with at most $\nu$ values.

Let $\Gamma_N = \Gamma \cap \{x_1\fa_1+\dots+x_d\fa_d:x\in [0,N)^d\}$. Let $H_N$ be the Schr\"odinger operator defined analogously on $\Gamma_N$. We endow $\Gamma_N$ with periodic boundary conditions~: if $\psi\in \ell^2(\Gamma_N)$, then $\psi(v+N\fa_j) := \psi(v)$. Our first result is the following.

\begin{thm}[Case $\nu=1$]\label{thm:mainnu1}
Let $\psi_u^{(N)}$ be an orthonormal basis of $\ell^2(\Gamma_N)$ consisting of eigenvectors of $H_N$. Suppose the fundamental crystal has only one vertex, $V_f=\{o\}$. Then for any observable $a=a_N :\Gamma_N \to \C$ such that $|a_N(v)|\le 1$ for all $v$ and $N$, we have
\begin{equation}\label{e:mainnu1}
\lim_{N\to\infty} \frac{1}{|\Gamma_N|}\sum_{u\in \Gamma_N} \left| \langle \psi_u^{(N)}, a\psi_u^{(N)}\rangle - \langle a\rangle\right|^2 = 0\,,
\end{equation}
where $\langle \psi_u^{(N)}, a\psi_u^{(N)}\rangle = \sum\limits_{v\in\Gamma_N} |\psi_u^{(N)}(v)|^2a(v)$ and $\langle a\rangle = \frac{1}{|\Gamma_N|}\sum\limits_{v\in\Gamma_N} a(v)$ is the uniform average.
\end{thm}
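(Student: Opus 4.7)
The plan is to use the Floquet--Bloch theorem of Section~\ref{sec:bloch} to diagonalize $H_N$ in an explicit Fourier basis, reduce the QE variance to a Hilbert--Schmidt expression on each eigenspace of $H_N$, and then control the spectral degeneracies by the Schwartz--Zippel lemma. Since $V_f=\{o\}$, the $\fa_i$-periodicity of $Q$ forces $Q$ to be constant on $\Gamma=\Z_\fa^d$, so $H_N=\cA_N+cI$ and it suffices to study $\cA_N$. Let $S\subset\Z^d\setminus\{0\}$ be the (symmetric) set of neighbor vectors of $o$. The plane waves $e_k(n_\fa)=N^{-d/2}\ee^{2\pi\ii k\cdot n/N}$, indexed by $k\in(\Z/N\Z)^d$, form a canonical eigenbasis of $\cA_N$ with eigenvalues $h(k/N)$, where $h(\xi)=\sum_{s\in S}\ee^{2\pi\ii s\cdot\xi}$ is the (scalar, since $\nu=1$) Floquet dispersion. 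Since $\{\psi_u^{(N)}\}$ is another eigenbasis, we can write $\psi_u^{(N)}=\sum_{k\in I_{\lambda(u)}}c_{k,u}e_k$ with $I_\lambda:=\{k:h(k/N)=\lambda\}$ and $c=(c_{k,u})$ a unitary that is block-diagonal across the eigenspace decomposition.

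A direct Fourier computation gives $\langle e_k,a e_{k'}\rangle=\hat a(k'-k)$, where $\hat a(m):=|\Gamma_N|^{-1}\sum_n \ee^{2\pi\ii m\cdot n/N}a(n_\fa)$, and in particular $\hat a(0)=\langle a\rangle$. Substituting the eigenbasis expansion yields
\[
\langle\psi_u^{(N)},a\psi_u^{(N)}\rangle-\langle a\rangle=\sum_{k\ne k'\in I_{\lambda(u)}}\overline{c_{k,u}}\,c_{k',u}\,\hat a(k'-k).
\]
The $\ell^2$-norm of the diagonal of any matrix is bounded by its Hilbert--Schmidt norm, which is unitarily invariant; applying this within each eigenspace and summing gives
\[
\sum_u\bigl|\langle\psi_u^{(N)},a\psi_u^{(N)}\rangle-\langle a\rangle\bigr|^2\le\sum_{m\ne 0}|\hat a(m)|^2\,N_m,\quad N_m:=\#\bigl\{k\in(\Z/N\Z)^d:h(k/N)=h((k+m)/N)\bigr\}.
\]
By Parseval and $|a|\le 1$ we have $\sum_m|\hat a(m)|^2\le 1$, so the right-hand side is bounded by $\max_{m\ne 0}N_m$.

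The crux, and the main obstacle I anticipate, is to show $\max_{m\ne 0}N_m=O(N^{d-1})$ with a constant depending only on $\Gamma$. Setting $z_j=\ee^{2\pi\ii k_j/N}$, the equation $h(k/N)=h((k+m)/N)$ reads $\sum_{s\in S}(\ee^{2\pi\ii s\cdot m/N}-1)\prod_j z_j^{s_j}=0$, and after multiplying by $\prod_j z_j^D$ with $D=\max_{s\in S}\|s\|_\infty$ to clear negative exponents, it becomes a polynomial of total degree at most $2dD$ in $(z_1,\dots,z_d)\in\mu_N^d$. This polynomial vanishes identically iff $\ee^{2\pi\ii s\cdot m/N}=1$ for every $s\in S$, i.e.\ iff $s\cdot m/N\in\Z$ for every $s$; since $\Gamma$ is connected and $V=\Z_\fa^d$, the set $S$ must generate $\Z^d$ as an abelian group (otherwise $\Gamma$ would decompose into cosets), and this forces $m\equiv 0\pmod N$. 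For $m\not\equiv 0\pmod N$ the polynomial is therefore nontrivial, and the Schwartz--Zippel lemma applied over the set $\mu_N\subset\C$ of $N$-th roots of unity yields $N_m\le 2dD\cdot N^{d-1}$. Dividing through by $|\Gamma_N|=N^d$ then gives $\frac{1}{|\Gamma_N|}\sum_u|\langle\psi_u^{(N)},a\psi_u^{(N)}\rangle-\langle a\rangle|^2\le 2dD/N\to 0$, which is the theorem with the quantitative rate $O(N^{-1})$. The argument relies crucially on $\nu=1$---the dispersion being a single scalar function rather than one of several band branches---which is consistent with the fact, flagged in the introduction, that only partial or no quantum ergodicity can be expected when $\nu>1$.
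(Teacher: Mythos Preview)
Your proof is correct. It is a genuinely different, and in several respects more elementary, route to Theorem~\ref{thm:mainnu1} than the paper's.

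The paper first establishes the general Theorem~\ref{thm:maingen} for arbitrary $\nu$ via a time-averaging/Egorov argument: it rewrites $\langle\psi_u,a\psi_u\rangle$ using $\frac1T\int_0^T\ee^{\ii tH_N}a\ee^{-\ii tH_N}\dd t$, identifies this as a phase-space operator $\opn(F_T)$, passes to the $T\to\infty$ limit $\opn(b)$, and bounds the Hilbert--Schmidt norm of the off-diagonal part $\opn(c)$ in terms of the degeneracy count in \eqref{e:eigenass}. It then verifies \eqref{e:eigenass} for $\nu=1$ in \S\ref{sec:scalfib} by projecting the level-set equation onto a carefully chosen direction $\phi$ (constructed via a Vandermonde list of vectors), reducing to a one-variable trigonometric polynomial, and invoking the fundamental theorem of algebra.

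You bypass the time-averaging machinery entirely. By observing that for $\nu=1$ the Fourier basis $(e_k)$ already diagonalizes $H_N$, you reduce the variance directly to an off-diagonal Hilbert--Schmidt sum within each $H_N$-eigenspace; this is exactly what the paper's Steps~1--3 accomplish in the general setting, but your argument is two lines. For the degeneracy bound, your use of the Schwartz--Zippel lemma over $\mu_N^d$ replaces the paper's projection-and-FTA argument, avoids the Vandermonde construction, and gives the sharper constant $2dD$ in place of the paper's $4d\ell_D^{d-1}q$. What the paper's approach buys is that its Steps~1--4 apply verbatim to $\nu>1$ (yielding Theorem~\ref{thm:maingen} and Proposition~\ref{prp:cart}), whereas your direct Fourier-block argument is specific to the scalar-fiber case; your method, however, is the cleaner way to prove Theorem~\ref{thm:mainnu1} on its own.
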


This means that in a weak sense, we have $|\psi_u^{(N)}(v)|^2\approx \frac{1}{|\Gamma_N|}$ when $N$ is large enough. That is, the eigenvectors $\psi_u^{(N)}$ are uniformly distributed. This theorem applies to the adjacency matrix on $\Z^d$ and the triangular lattice, see \S\ref{sec:scalexa} for more graphs.

To our knowledge, Theorem~\ref{thm:mainnu1} is the first positive result establishing quantum ergodicity for a general family of graphs $\Gamma$ having cycles.

This statement is generally false for higher $\nu$, quantum ergodicity can be completely violated for $\nu=2$ without further assumptions, see \S\ref{sec:stro}. For general $\nu$, we make an assumption on the Floquet eigenvalues and relax the conclusion. Let $\fb_1,\dots,\fb_d$ be the dual basis, satisfying $\fa_i\cdot\fb_j=2\pi\delta_{i,j}$. We similarly denote $\theta_\fb = \sum_{i=1}^d\theta_i\fb_i$. Then we have:

\begin{thm}[General case]\label{thm:maingen}
Let $\psi_u^{(N)}$ be an orthonormal basis of $\ell^2(\Gamma_N)$ consisting of eigenvectors of $H_N$. Let $H(\theta_\fb)$ be the $\nu\times \nu$ matrix arising in the Floquet decomposition, with eigenvalues $E_1(\theta_\fb),\dots,E_\nu(\theta_\fb)$. Suppose that for any $s,w\in \{1,\dots,\nu\}$, we have
\begin{equation}\label{e:eigenass}
\lim_{N\rightarrow \infty}\sup_{\substack{m\in \LL_N^d\\m\neq 0}} \frac{\#\{r\in \LL_N^d : E_s(\frac{r_\fb+m_\fb}{N})-E_w(\frac{r_\fb}{N})=0\}}{N^d} = 0\,,
\end{equation}
where $\LL_N = \{0,1,\dots,N-1\}$. Then,

\begin{enumerate}[\rm(i)]
    \item For any observable $a_N :\Gamma_N \to \C$ such that $|a_N(v)|\le 1$ for all $v$ and $N$, we have
\begin{equation}\label{e:main}
\lim_{N\to\infty} \frac{1}{|\Gamma_N|}\sum_{u\in \Gamma_N} \left| \langle \psi_u^{(N)}, a\psi_u^{(N)}\rangle - \langle \psi_u^{(N)}, \opn(\overline{a})\psi_u^{(N)}\rangle\right|^2 = 0\,,
\end{equation}
where $\opn(\overline{a})$ is an explicit operator (see \eqref{e:opnabarpsi}).

If $a=a_N$ is real-valued, we have
\begin{equation}\label{e:asavboun}
\min_{v_q\in V_f} \langle a(\cdot+v_q)\rangle \le \langle \psi_u^{(N)}, \opn(\overline{a})\psi_u^{(N)}\rangle \le \max_{v_q\in V_f} \langle a(\cdot+v_q)\rangle\,,
\end{equation}
where $\langle a(\cdot+v_q)\rangle = \frac{1}{N^d}\sum_{k\in\LL_N^d} a(k_\fa+v_q)$.

\item
If $a$ is locally constant, in the sense that it takes a constant value on each periodic block, then it suffices to assume \eqref{e:eigenass} for $w=s$, $1\le s\le \nu$. Moreover, in this case,
\begin{equation}\label{e:avconverge}
\langle \psi_u^{(N)}, \opn(\overline{a})\psi_u^{(N)}\rangle = \langle a\rangle:=\frac{1}{|\Gamma_N|}\sum_{v\in\Gamma_N} a(v).
\end{equation} 
Specifically, this is true if $\nu=1$. 

\end{enumerate}

\end{thm}

This theorem applies, for example, to the adjacency matrix on the honeycomb lattice and to periodic Schr\"odinger operators on $\Z$, see \S\ref{sec:honey} and \S\ref{sec:1d}.

Assumption \eqref{e:eigenass} says in particular that the Floquet eigenvalues should not have a short period and should not ``hesitate'' while tracing the band, going back and forth too often at exactly the same speed. More precisely, for any nonzero $\alpha$ and any $s$, the set
\begin{equation}\label{e:aalphas}
A_{\alpha,s}:=\{\theta\in [0,1)^d:E_s(\theta_\fb+\alpha_\fb) = E_s(\theta_\fb)\}
\end{equation}
should be of zero measure. For example, for $d=1$, we should not have $E_s(\theta_\fb) = \cos 4\pi\theta$, as then for $\alpha=\frac{1}{2}$, we get $E_s(\theta_\fb+\alpha_\fb)=E_s(\theta_\fb)$ for all $\theta$. The assumption also implies there is no point spectrum, because flat bands require $E_s$ to be constant for some $s$.

In an earlier version of the manuscript we left as an open problem whether assumption \eqref{e:eigenass} is satisfied for Schr\"odinger operators on $\Z^d$ with a periodic potential. This has since been solved by Wencai Liu \cite{Wen2} using results on the irreducibility of Bloch varieties. See \S\ref{sec:irred} for background and further criteria. In particular, \cite{Wen2} and Theorem~\ref{thm:maingen} imply that Schr\"odinger operators with periodic potentials on $\Z^d$ are quantum ergodic for any $d$.

Theorem~\ref{thm:mainnu1} only applies to the adjacency matrix on regular graphs of even degree (as follows from the assumption $\nu=1$, see \S\ref{sec:scalfib}). The following proposition uses Theorem~\ref{thm:maingen} to provide concrete applications to non-regular graphs endowed with a periodic potential.

\begin{prp}[Cartesian products]\label{prp:cart}
Suppose that $\Gamma$ is a $\Z_\fa^d$-periodic graph with $\nu=1$, and let $G_F$ be any finite graph, endowed with some potential $Q$. Then the Cartesian product $\Gamma\mathop\square G_F$ is a periodic graph with fundamental crystal $V_f=G_F$ and periodic potential $Q$ copied across the $G_F$ layers. Moreover, assumption \eqref{e:eigenass} is satisfied, so \eqref{e:main} holds true.

If for $\Gamma\mathop\square G_F$, the orthonormal basis is of the form $\psi_{n,j} = \phi_n\otimes w_j$, where $(\phi_n)$ is an orthonormal eigenbasis for $H_{\Gamma_N}$, and $(w_j)$ is an orthonormal eigenbasis for $H_{G_F}$, then
\begin{equation}\label{e:cartav}
\langle \psi_{n,j},\opn(\overline{a})\psi_{n,j}\rangle = \sum_{v_q\in G_F}\langle a(\cdot+v_q)\rangle |w_j(v_q)|^2\,,
\end{equation}
where $\langle a(\cdot+v_q)\rangle = \frac{1}{N^d}\sum_{k\in \LL_N^d} a(k_\fa+v_q)$.
\end{prp}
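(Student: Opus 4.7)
The plan is to peel the proposition into three pieces: recognize $\Gamma \mathop\square G_F$ as a $\Z_\fa^d$-periodic graph with crystal $G_F$, verify assumption \eqref{e:eigenass} by reducing it to a shifted-level-set statement for the single Floquet eigenvalue of $\Gamma$, and finally compute the tensor-product matrix element directly. For the first piece, vertices of $\Gamma \mathop\square G_F$ are pairs $(v, w)$ with $v \in V_\Gamma$, $w \in V_{G_F}$, and the $\Z_\fa^d$-action on $\Gamma$ lifts by translating the first factor. The fundamental crystal is $\{o\} \times V_{G_F} \cong G_F$ and $\nu = |V_{G_F}|$. The Schr\"odinger operator decomposes as $H = H_\Gamma \otimes I + I \otimes H_{G_F}$, and since $H_\Gamma$ has a scalar Floquet symbol $E(\theta_\fb)$ (the $\nu = 1$ case for $\Gamma$), the $\nu \times \nu$ Floquet matrix of $H$ is $H(\theta_\fb) = E(\theta_\fb) I_\nu + H_{G_F}$. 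Diagonalizing $H_{G_F}$ with eigenvalues $\mu_1, \dots, \mu_\nu$ gives $E_s(\theta_\fb) = E(\theta_\fb) + \mu_s$.

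The hypothesis \eqref{e:eigenass} then reads: $E(\tfrac{r_\fb + m_\fb}{N}) - E(\tfrac{r_\fb}{N}) = \mu_w - \mu_s$ has $o(N^d)$ solutions $r$ uniformly in $m \neq 0$. When $\mu_s = \mu_w$ this is the $\nu = 1$ case of \eqref{e:eigenass} for $\Gamma$ alone, which is implicit in Theorem~\ref{thm:mainnu1}. When $\mu_s \neq \mu_w$, set $c := \mu_w - \mu_s \neq 0$ and $f_\alpha(\theta) := E(\theta_\fb + \alpha_\fb) - E(\theta_\fb)$. Integrating over the torus in $\theta$, translation invariance forces $\int f_\alpha = 0$, so $f_\alpha \not\equiv c$ for any $\alpha$, and analyticity of $E$ in $\theta_\fb$ makes $\{f_\alpha = c\}$ a proper analytic subvariety of measure zero. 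For small $\alpha$ the continuity bound $|f_\alpha| \le C|\alpha|$ even makes the level set empty once $|\alpha| < |c|/C$; for $\alpha$ bounded below, a Riemann-sum approximation controlled by the modulus of continuity pushes the discrete density $\#\{r : f_{m/N}(r/N) = c\}/N^d$ to zero uniformly. Patching the two regimes yields \eqref{e:eigenass}.

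With \eqref{e:eigenass} established, conclusion \eqref{e:main} follows directly from Theorem~\ref{thm:maingen}(i). For the explicit identity \eqref{e:cartav}, I would use the fact (read off from \eqref{e:opnabarpsi}) that $\opn(\overline{a})$ acts as multiplication by the crystal-averaged symbol $\overline{a}(v) := \langle a(\cdot + \{v\}_\fa)\rangle$; this is consistent with Theorem~\ref{thm:maingen}(ii), since for locally constant $a$ every $\langle a(\cdot + v_q)\rangle$ equals $\langle a\rangle$. For a product eigenvector $\psi_{n,j} = \phi_n \otimes w_j$, the squared modulus factorizes as $|\psi_{n,j}(k_\fa + v_q)|^2 = |\phi_n(k_\fa)|^2 |w_j(v_q)|^2$, so summing first over $k \in \LL_N^d$ (using $\sum_k |\phi_n(k_\fa)|^2 = 1$) and then over $v_q$ yields exactly \eqref{e:cartav}.

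The main obstacle is the uniformity in $m$ when verifying \eqref{e:eigenass}, because $|\alpha_\fb| = |m_\fb|/N$ can be arbitrarily small as $N \to \infty$. The dichotomy above handles this: small $\alpha$ are eliminated by the a priori continuity estimate $|f_\alpha| \lesssim |\alpha|$ (which crucially uses $c \neq 0$), while moderate $\alpha$ are controlled by the analytic-level-set argument and a quantitative Riemann-sum approximation. The tensor-product formula, by contrast, is a straightforward unfolding once the explicit form of $\opn(\overline{a})$ from \eqref{e:opnabarpsi} is in hand.
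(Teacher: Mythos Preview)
Two steps have genuine problems.

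\textbf{On \eqref{e:cartav}: the premise is false.} You assert that \eqref{e:opnabarpsi} shows $\opn(\overline{a})$ acts as multiplication by $v\mapsto \langle a(\cdot+\{v\}_\fa)\rangle$. It does not: \eqref{e:opnabarpsi} reads $\langle \psi, \opn(\overline{a})\psi\rangle = \sum_{q} \langle a(\cdot + v_q)\rangle \sum_{r,s}\big|[P_{E_s}(\tfrac{r_\fb}{N})(U\psi)_r](v_q)\big|^2$, whereas multiplication would give $\sum_q \langle a(\cdot+v_q)\rangle\sum_r |(U\psi)_r(v_q)|^2$, and these differ because $\sum_s |(P_{E_s}g)(v_q)|^2 \neq |g(v_q)|^2$ in general (try $\nu=2$, $P_\pm$ the projections onto $(1,\pm1)/\sqrt{2}$, $g=\delta_{v_1}$). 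The paper's derivation of \eqref{e:cartav} instead uses two features specific to the Cartesian product: the projectors $P_{E_s}$ are the $\theta$-independent eigenprojections of $H_{G_F}$, and for $\psi=\phi_n\otimes w_j$ one has $(U\psi)_r = \widehat{\phi}_n(r)\,w_j$, so $P_{E_s}(U\psi)_r$ equals $(U\psi)_r$ or $0$ according as $E_s=E_j$ or not. Only after this collapse of the sum over $s$ does \eqref{e:opnabarpsi} reduce to $\sum_q\langle a(\cdot+v_q)\rangle|w_j(v_q)|^2$. Your computation lands on the right formula only because, for these particular tensor vectors, the false general claim and the correct mechanism happen to give the same output.

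\textbf{On \eqref{e:eigenass}: the uniformity is not established.} Your dichotomy on $|\alpha|$ is sensible and the small-$\alpha$ case ($|f_\alpha|\le C|\alpha|<|c|$ forces the level set empty) is clean. But for $\alpha$ bounded away from zero you invoke ``a Riemann-sum approximation controlled by the modulus of continuity'' to conclude that $\#\{r:f_{m/N}(r/N)=c\}/N^d\to 0$ uniformly in $m$. Knowing that $\{f_\alpha=c\}$ has measure zero for each fixed $\alpha$ does not bound the number of lattice points lying exactly on it, and turning this into a statement uniform over a compact family of $\alpha$ requires a quantitative input (e.g.\ a \L{}ojasiewicz-type sublevel-set bound) that you have not supplied. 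The paper avoids the dichotomy entirely: it simply reruns the polynomial root-counting of \S\ref{sec:scalfib}, observing that the extra constant $\mu_s-\mu_w$ adds only a degree-zero term to $z^{\gamma_\star}\tilde{g}_{m,r}(z)$, which cannot cancel any other monomial since every $\gamma_p\neq 0$. This gives a uniform $O(N^{-1})$ bound directly, handling all $s,w$ and all $m\neq 0$ at once.
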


Theorem~\ref{thm:maingen}(ii) shows that for most $u$, $|\psi_u^{(N)}|^2$ behaves as a periodic function across the blocks, but the distribution of its mass within each block may be non-uniform. Very loosely speaking, one has the picture that most eigenfunctions behave like Bloch functions. More precisely, for most $u$, $\sum_{v_q\in V_f}|\psi_u^{(N)}(k_\fa+v_q)|^2\approx \frac{1}{N^d}$, for any $k$ (in a weak sense).

On the other hand, \eqref{e:cartav} shows that the mass distribution within each block is not universal and can depend on the eigenbasis in general (see \S\ref{sec:cyl} for a concrete example). Such base-dependence never appeared in the tree models of \cite{Ana,AS}. There, the theorems established that $|\psi_u^{(N)}(v)|^2 \approx \frac{\Im \widetilde{g}_N^{\lambda_j}(\tilde{v},\tilde{v})}{\sum_{v\in \Gamma_N} \Im \widetilde{g}_N^{\lambda_j}(\tilde{v},\tilde{v})}$, where $\widetilde{g}_N^z$ is the Green's function of the universal cover of $\Gamma_N$. In particular, it is certainly independent of $\psi_{\lambda_j}^{(N)}$. Here we have a different phenomenon which can be regarded as partial quantum ergodicity.

Such partial quantum ergodicity can be violated even in dimension one~:

\begin{prp}\label{prp:zno}
There exist $\Z$-periodic graphs which violate \eqref{e:main}.
\end{prp}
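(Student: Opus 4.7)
The plan is to exploit a \emph{flat band} in the Floquet decomposition. A constant Floquet eigenvalue immediately violates \eqref{e:eigenass}, and it forces $H_N$ to have an eigenspace of dimension $N$ at a single energy, in which we may freely choose a basis of compactly supported eigenvectors. Tested against a locally constant observable, such a basis will produce matrix elements incompatible with the ``Floquet-coarsened'' predictor $\opn(\overline{a})$.

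\emph{Construction.} Consider the double--pendant chain $\Gamma$ with vertex set $V=\Z\cup\{p_n^+,p_n^-:n\in\Z\}$, edges $\{n,n+1\}$ and $\{n,p_n^{\pm}\}$ for every $n\in\Z$, and $Q\equiv 0$. This is a connected, $\Z$-periodic graph with $\fa_1=1$, $V_f=\{0,p_0^+,p_0^-\}$ and $\nu=3$. A direct computation of the Floquet matrix gives
\[
H(\theta_\fb)=\begin{pmatrix}2\cos\theta_\fb & 1 & 1\\ 1 & 0 & 0\\ 1 & 0 & 0\end{pmatrix},\qquad \sigma(H(\theta_\fb))=\left\{0,\ \cos\theta_\fb\pm\sqrt{\cos^2\theta_\fb+2}\right\}.
\]
The flat band $E_1\equiv 0$ manifestly violates \eqref{e:eigenass} (for $s=w=1$ the numerator equals $N$ for every $m$). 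Correspondingly, on $\Gamma_N$ the antisymmetric pendant states $\phi_n:=(e_{p_n^+}-e_{p_n^-})/\sqrt 2$, $n=0,\dots,N-1$, are orthonormal compactly supported $0$-eigenvectors of $H_N$, because the two pendants are identically coupled to the backbone vertex $n$. Complete them to a full orthonormal eigenbasis $(\psi_u^{(N)})_u$ of $H_N$ using any eigenvectors of the two dispersive bands for the remaining $2N$ slots.

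\emph{Defeating \eqref{e:main}.} Let $a_N$ be the indicator of the first $\lfloor N/2\rfloor$ cells: it is locally constant with $|a_N|\le 1$ and $\langle a_N\rangle\to 1/2$. Since $\phi_n$ is supported in the single cell $n$, we have $\langle \phi_n, a_N\phi_n\rangle\in\{0,1\}$. On the other hand, for locally constant $a_N$ the identity \eqref{e:avconverge} gives $\langle \phi_n, \opn(\overline{a}_N)\phi_n\rangle=\langle a_N\rangle\to 1/2$. Consequently
\[
\frac{1}{|\Gamma_N|}\sum_{u}\bigl|\langle\psi_u^{(N)},a\psi_u^{(N)}\rangle-\langle\psi_u^{(N)},\opn(\overline{a})\psi_u^{(N)}\rangle\bigr|^2\;\ge\;\frac{1}{3N}\cdot\frac{N}{4}+o(1)\;\longrightarrow\;\frac{1}{12},
\]
so the limit in \eqref{e:main} is bounded below by $1/12$, contradicting it.

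\emph{Main obstacle.} The delicate point is justifying $\langle \phi_n, \opn(\overline{a}_N)\phi_n\rangle=\langle a_N\rangle$ in a regime where \eqref{e:eigenass} \emph{fails}. This identity is formally packaged inside Theorem~\ref{thm:maingen}(ii), but its underlying reason is algebraic rather than spectral: local constancy of $a$ means that its Bloch symbol lies in the zero--momentum sector, so the operator $\opn(\overline{a})$ defined by \eqref{e:opnabarpsi} collapses to multiplication by $\langle a\rangle$ regardless of any spectral assumption --- this collapse is already evident from \eqref{e:cartav}, where local constancy forces all sublattice averages $\langle a(\cdot+v_q)\rangle$ to coincide with $\langle a\rangle$. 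Once this algebraic reduction is checked directly from \eqref{e:opnabarpsi}, the rest is the one-line estimate above.
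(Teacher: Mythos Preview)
Your proof is correct and follows essentially the same strategy as the paper: build a $\Z$-periodic graph with $\nu=3$ carrying a flat band, use the resulting $N$ compactly supported eigenvectors, and test against a locally constant observable to get the lower bound $\tfrac{1}{12}$. The paper uses $\Z$ decorated with triangles (eigenvalue $-1$, alternating observable on even/odd cells) while you use the double-pendant chain (eigenvalue $0$, indicator of the first half), and you are somewhat more explicit than the paper about why $\langle\phi_n,\opn(\overline{a})\phi_n\rangle=\langle a\rangle$ survives the failure of \eqref{e:eigenass} --- correctly observing that the Step~4 computation behind \eqref{e:avconverge} is purely algebraic and never invokes the spectral hypothesis.
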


We give examples in \S\ref{sec:deco} and \S\ref{sec:tenstro}. These graphs have point spectrum and \eqref{e:eigenass} is not satisfied. It is natural to ask if assumption \eqref{e:eigenass} can be dropped if we simply assume that $H_\Gamma$ has pure ac spectrum. We construct a counterexample in \S\ref{sec:tenstro}~:

\begin{prp}\label{prp:counterass}
There exist periodic graphs with purely absolutely continuous spectrum which are not quantum ergodic.
\end{prp}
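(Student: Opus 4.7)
The plan is to exhibit an explicit $\Z$-periodic graph with $\nu=2$ and no flat bands whose two Floquet bands are exact $\pi$-translates of each other, so that the supremum in~\eqref{e:eigenass} equals one along even $N$. I take $\Gamma$ to be the graph with vertex set $V = \Z \times \{0,1\} \subset \R^2$ and edges $(n,i) \sim (n \pm 1, 1-i)$ for $n \in \Z$, $i \in \{0,1\}$, namely the tensor product of $\Z$ with the single-edge graph on two vertices. This fits the framework of Section~1 with $\fa_1 = (1,0)$, $V_f = \{(0,0),(0,1)\}$, $\nu = 2$ and $\fb_1 = 2\pi$. The graph splits into two connected components each isomorphic to $\Z$, but it is still a legitimate periodic graph.

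A one-line Fourier computation in the $n$-variable gives the Floquet matrix
\[
H(\theta_\fb) = \begin{pmatrix} 0 & 2\cos\theta_\fb \\ 2\cos\theta_\fb & 0 \end{pmatrix},
\]
with real-analytic eigenvalues $E_1(\theta_\fb) = 2\cos\theta_\fb$ and $E_2(\theta_\fb) = -2\cos\theta_\fb$. Neither band is constant, so standard Floquet theory for $\Z^d$-periodic graphs yields $\sigma(H) = [-2,2]$ with purely absolutely continuous spectral type; equivalently, $H$ is unitarily equivalent to two copies of the standard adjacency operator on $\Z$.

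The key identity is $E_1(\theta_\fb + \pi) = 2\cos(\theta_\fb + \pi) = -2\cos\theta_\fb = E_2(\theta_\fb)$ for every $\theta_\fb \in \R$. Hence, whenever $N$ is even, the choice $m = N/2 \in \LL_N \setminus \{0\}$ satisfies $m_\fb / N = \pi$, and therefore
\[
E_1\!\left(\frac{r_\fb + m_\fb}{N}\right) = E_2\!\left(\frac{r_\fb}{N}\right) \quad \text{for every } r \in \LL_N.
\]
The cardinality appearing in~\eqref{e:eigenass} equals $N = N^d$, the ratio equals $1$, and $\limsup_{N \to \infty} \sup_{m \neq 0}(\cdots) \ge 1$, which violates~\eqref{e:eigenass}.

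The substantive obstacle is to arrange the translation identity $E_1(\cdot + \alpha) = E_2$ without either band becoming constant. The tensor product construction is a clean mechanism for this: starting from a $\nu = 1$ graph $\Gamma_0$ whose Floquet band $E$ satisfies $E(\theta_\fb + \alpha) = -E(\theta_\fb)$ for some $\alpha \neq 0$ (here $\Gamma_0 = \Z$ with $\alpha = \pi$, since $\cos(\theta_\fb + \pi) = -\cos\theta_\fb$), the tensor product with the two-vertex graph automatically has Floquet bands $\pm E$ realising the required shift, while non-constancy of $E$ rules out flat bands.
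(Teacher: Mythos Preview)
Your construction has a genuine gap: the graph $\Gamma=\Z\times P_2$ that you use is disconnected, and you even say so explicitly. But the paper's standing hypothesis at the beginning of \S1.1 is that $\Gamma$ is a \emph{connected} periodic graph. Disconnected periodic graphs trivially violate~\eqref{e:eigenass}: the paper itself points this out in \S4.1 with the example of $\Z$ equipped with the second-neighbour adjacency, which is just two interlaced copies of $\Z$ and already fails~\eqref{e:eigenass}. So your example does not establish the proposition in the sense intended.

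This is exactly why the paper's own proof works harder. It also uses a tensor product $\Z\times G_F$, but chooses $G_F$ so that three conditions hold simultaneously: $G_F$ contains an odd cycle (so that $\Z\times G_F$ is connected, by Weichsel's theorem), $0\notin\sigma(\cA_{G_F})$ (so there is no flat band), and the spectrum of $G_F$ contains a symmetric pair $\mu_s,-\mu_s$ (to force the band-shift identity you want). The simplest bipartite choice $G_F=P_2$ satisfies the last two conditions but fails the first. The paper takes $G_F=C_3\mathop\square P_4$ to satisfy all three. Your general mechanism in the final paragraph is the same as the paper's, but the connectedness constraint is the whole difficulty, and your example sidesteps it rather than solving it.
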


Our counterexample violates \eqref{e:eigenass} for some $w\neq s$. Assumption \eqref{e:eigenass} is satisfied for all $w=s$, $1\le s\le \nu$. In view of Theorem~\ref{thm:maingen}(ii), quantum ergodicity is thus satisfied for locally constant observables. The interpretation is that, while the total mass within each periodic block is the same, if we focus on just one vertex $v_1$ of the block and see how the mass $|\psi_u(k+v_1)|^2$ varies with the position $k$, then the variation is not uniform.

We do not have a counterexample violating \eqref{e:eigenass} for $w=s$. The following question on the eigenvalues seems to be of independent interest. Recall $A_{\alpha,s}$ in \eqref{e:aalphas}.

\subsection*{Open problem}
Do all connected periodic graphs with purely absolutely continuous spectrum satisfy $\mathrm{Leb}(A_{\alpha,s})=0$, for any $\alpha\neq 0$ and any $1\le s\le \nu$ ?

\medskip

If the answer is yes, then quantum ergodicity \emph{for locally constant observables} holds for all connected periodic graphs with pure ac spectrum, by Theorem~\ref{thm:maingen}(ii) and the argument in the proof of \cite[Cor. 1.3]{Wen2}. See Remark~\ref{e:perpluspot} for a related question.

\begin{rem}
Instead of considering the whole spectrum in Theorem~\ref{thm:maingen}, we can instead suppose that \eqref{e:eigenass} is satisfied in some interval $I$, then the conclusion \eqref{e:main} now holds if we average over $\lambda_u^{(N)}\in I$ instead of $u\in \Gamma_N$. This is similar to what is done in \cite{AS} for the high girth regime.  In other words, if part of the spectrum is well-behaved, then the corresponding eigenfunctions are quantum ergodic. This is helpful for example for graphs having flat bands but satisfying \eqref{e:eigenass} away from the degenerate eigenvalue. Then our theorem applies to these regions. For the technical details, see Remark~\ref{rem:I}.
\end{rem}

\begin{rem}[Convergence rate]
The proof shows that the variance on the LHS of \eqref{e:main} is essentially bounded from above by the fraction in \eqref{e:eigenass}. For $\nu=1$, we bound the latter by $\frac{C}{N}$ in \S\ref{sec:scalfib}, so the speed of convergence is at least $\frac{1}{N}$ in Theorem~\ref{thm:mainnu1}, which is significantly faster than the logarithmic rate $\frac{1}{\log N}$ of the tree case \cite{Ana,ALM,AS}.
\end{rem}

\begin{rem}\label{e:perpluspot}
The fact that a perfectly homogeneous graph like the one in Figure~\ref{fig:box} supports localized eigenfunctions is quite counterintuitive. It seems the problem is not the geometry here, but the potential $Q=0$, which is a degenerate case if one regards periodic graphs as naturally carrying a periodic potential. It is conjectured in \cite{KorSa} that the spectrum of any periodic graph becomes absolutely continuous if we add a potential $Q(v_1)<\dots<Q(v_\nu)$ having distinct values in the fundamental cell. In this spirit, we show that the graph in Figure~\ref{fig:box} becomes quantum ergodic once we add any potential $\binom{Q}{-Q}$, copied across the layers.\footnote{We mention that \cite[Th. 2.2]{KorSa} actually prove that the spectrum becomes purely ac if we add a potential $tQ(v_1)<\dots<tQ(v_\nu)$ with $t$ sufficiently large, assuming each $(H(\theta_\fb))_{jj}$ is nonconstant in $\theta$. In this case the spectral bands are moreover shown to be disjoint.}
\end{rem}

\subsection{Stronger statements}
The following two paragraphs illustrate that one cannot obtain much stronger results than the ones we provide.
\subsubsection{Quantum Unique Ergodicity}\label{sec:que}
In \cite{ALM}, it was suggested to check whether
\begin{equation}\label{e:que}
\lim_{N\to\infty} \sup_{1\le j\le |\Gamma_N|}|\langle \psi_j^{(N)},a_N \psi_j^{(N)}\rangle - \langle a_N\rangle|=0
\end{equation}
as an indication of \emph{quantum unique ergodicity (QUE)}. This would mean that we can avoid the Ces\`aro average in \eqref{e:mainnu1}. This criterion is too strong however, at least in our context, in fact it is already violated for the adjacency matrix on $\Z^d$. See \S\ref{sec:quecor}.

\subsubsection{Eigenvector correlators}\label{sec:eic}
In \cite{Ana}, instead of taking observables $a_N(n)$ which are functions on $\Gamma_N$, a quantum ergodicity theorem was proved more generally for band matrix observables, that is, $K_N(n,m)$, where $K_N(n,m)=0$ if $d(n,m)>R$. It was shown (in Ces\`aro sense) that $\langle \psi_j^{(N)},K_N\psi_j^{(N)}\rangle \approx \langle K_N\rangle_{\lambda_j}$, where $\langle K_N\rangle_{\lambda} = \frac{1}{|\Gamma_N|}\sum_{n,m} K_N(n,m) \Phi_{\lambda}(d(x,y))$ and $\Phi_{\lambda}$ is the spherical function of the tree; it has an explicit form in terms of Chebyshev polynomials. Since $\langle \psi_j^{(N)},K_N\psi_j^{(N)}\rangle = \sum_{n,m}K_N(n,m)\overline{\psi_j^{(N)}(n)}\psi_j^{(N)}(m)$, this shows that the eigenfunction correlator $\overline{\psi_j^{(N)}(n)}\psi_j^{(N)}(m) \approx \frac{1}{|\Gamma_N|} \Phi_{\lambda}(d(n,m))$, a universal quantity; this generalizes the statement that $|\psi_j^{(N)}(n)|^2 \approx \frac{1}{|\Gamma_N|}$.

This stronger statement fails in our case; $\overline{\psi_j^{(N)}(n)}\psi_j^{(N)}(m)$ is not universal, it depends on the basis, even for $\cA_{\Z^d}$. See \S\ref{sec:quecor}.

Still, our proof can be generalized to matrix observables $K_N$. If $\nu=1$, we show that
\[
\frac{1}{N^d}\sum_{j\in \LL_N^d} \big|\langle \psi_j^{(N)},K\psi_j^{(N)}\rangle - \langle K\rangle_{\psi_j^{(N)}}\big|^2\to 0\,,
\]
where $\langle K\rangle_\psi = \frac{1}{N^d}\sum_{n\in\LL_N^d}\sum_{|\tau|\le R} K(n_\fa,n_\fa+\tau_\fa) \langle\psi,\psi(\cdot+\tau_\fa)\rangle$, and $R$ is the width of the band matrix. So in a weak sense, $\overline{\psi_j^{(N)}(n_\fa)}\psi_j^{(N)}(n_\fa+\tau_\fa) \approx \frac{1}{N^d} \langle \psi_j^{(N)},\psi_j^{(N)}(\cdot+\tau_\fa)\rangle$.

\subsection{Structure of the paper}
We prove the general Theorem \ref{thm:maingen} in Section~\ref{sec:proof}. In \S\ref{sec:scalfib}, we prove that \eqref{e:eigenass} is satisfied for $\nu=1$, thereby proving Theorem \ref{thm:mainnu1}. We then discuss Cartesian products in \S\ref{sec:cartesian} and prove Proposition~\ref{prp:cart}. In \S\ref{sec:deco} and \S\ref{sec:tenstro}, we discuss graph decorations, tensor products and strong products of graphs, giving examples of graphs violating quantum ergodicity. In Section~\ref{sec:examp}, we give more specific examples satisfying quantum ergodicity. Finally in Section~\ref{sec:comp}, we discuss complementary results such as quantum unique ergodicity, eigenvector correlators, the Bloch theorem, as well as further criteria for checking \eqref{e:eigenass} based on Bloch varieties considerations.

\section{Proof of the general criterion}\label{sec:proof}

Here we prove Theorem~\ref{thm:maingen}. The argument is very different than the proof for trees \cite{Ana,ALM,AS}. We will use some ideas from \cite{Klimek} where ergodic averages for the continuous Laplacian $-\Delta$ on the torus $\R^d/\Z^d$ are studied, in the high frequency limit.

\subsection{Step 1}
Since $\ee^{-\ii tH_N}\psi_u^{(N)} = \ee^{-\ii t \lambda_u^{(N)}}\psi_u^{(N)}$, $\langle \psi_u^{(N)},\ee^{\ii t H_N}a\ee^{-\ii t H_N}\psi_u^{(N)}\rangle = \langle \psi_u^{(N)},a\psi_u^{(N)}\rangle$ and we have
\begin{equation}\label{e:fisvar}
\langle \psi_u^{(N)},a\psi_u^{(N)}\rangle = \Big\langle\psi_u^{(N)},\frac{1}{T}\int_0^T\ee^{\ii t H_N} a\ee^{-\ii t H_N}\,\dd t\psi_u^{(N)}\Big\rangle \,.
\end{equation}

In the spirit of Egorov's theorem, we show the sandwich $\ee^{\ii t H_N} a\ee^{-\ii t H_N}$ can be expressed as a kind of phase space operator.

Let $\LL_N^d = [\![0,N-1]\!]^d$ and define $U:\ell^2(\Gamma_N)\to \mathop\oplus_{j\in \LL_N^d}\ell^2(V_f)$ by
\begin{equation}\label{e:u}
(U\psi)_j(v_n) = \frac{1}{N^{d/2}} \sum_{k\in \LL_N^d}\ee^{-\frac{\ii j_{\fb}}{N}\cdot k_{\fa}} \psi(v_n+k_{\fa})\,.
\end{equation}
We see as in \cite[\S 3.2]{BoSa} that $U$ is unitary and
\begin{equation}\label{e:uhu-}
UH_NU^{-1} = \mathop\oplus_{j\in\LL_N^d} H\Big(\frac{j_{\fb}}{N}\Big) \,,
\end{equation}
where $H(\theta_\fb)$ acts on $\ell^2(V_f)$ by
\begin{equation}\label{e:htheta}
H(\theta_\fb)f(v_n) = \sum_{u\sim v_n} \ee^{\ii\theta_\fb\cdot \lfloor u \rfloor_{\fa}} f(\{u\}_{\fa}) + Q(v_n)f(v_n)\,.
\end{equation}
The sum is over the nearest neighbors $u$ of $v_n$ in the whole graph $\Gamma$ (not just $V_f$) and we have $u = \lfloor u\rfloor_{\fa} + \{u\}_{\fa}$, with $\lfloor u\rfloor_{\fa} \in \Z_{\fa}^d$ and $\{u\}_{\fa}\in V_f$, cf. \eqref{e:sumlat}.

We have $U^{-1}\big((g_j)_{j\in \LL_N^d}\big)= \psi$, where $\psi(k_{\fa}+v_n) = \frac{1}{N^{d/2}}\sum_{r\in \LL_N^d} g_r(v_n)\ee^{\ii k_{\fa}\cdot\frac{r_{\fb}}{N}}$. To see this, note that $(\frac{1}{N^{d/2}} \ee^{-\ii k_{\fa}\cdot j_{\fb}/N})_{j\in\LL_N^d}$ is an orthonormal basis of $\ell^2(\LL_N^d)$ (use the relation $k_{\fa}\cdot j_{\fb} = 2\pi k\cdot j$). So for such $\psi$ we have $(U\psi)_j(v_n) = \frac{1}{N^d}\sum_{k,r\in\LL_N^d} \ee^{-\frac{\ii j_{\fb}}{N}\cdot k_{\fa}} g_r(v_n)\ee^{\ii k_{\fa}\cdot\frac{r_{\fb}}{N}} = \sum_{k\in\LL_N^d} \langle \frac{1}{N^{d/2}} \ee^{\frac{-2\pi \ii k}{N}\cdot\bullet },g_{\bullet}(v_n)\rangle_{\ell^2(\LL_N^d)}(\frac{1}{N^{d/2}}\ee^{-\frac{\ii j_{\fb}}{N}\cdot k_{\fa}}) = g_j(v_n)$.

Note that $H(\theta_\fb)$ is a $\nu\times\nu$ matrix with eigenvalues $E_1(\theta_\fb),\dots,E_\nu(\theta_\fb)$. Let $P_s(\theta_\fb)$ be the corresponding eigenprojections.

Let $e_r^{(N)}(k):=\frac{1}{N^{d/2}}\ee^{2\pi\ii k\cdot r/N}$. Given $F\in \ell^2(\LL_N^{2d}\times V_f^2)$, we now let
\begin{equation}\label{e:opnper}
\opn(F)\psi(k_{\fa}+v_n): = \sum_{r\in\LL_N^d}\sum_{\ell=1}^\nu (U\psi)_r(v_\ell)F(k,r;v_n,v_\ell) e_r^{(N)}(k)\,,
\end{equation}

The ``quantization'' \eqref{e:opnper} is such that if $F(k,r;v_n,v_\ell) = F(k_\fa+v_n)\delta_{v_n,v_\ell}$, then $\opn(F)\psi = F\psi$. The presence of $\delta_{v_n,v_\ell}$ may seem unusual; indeed it would not be here if we were dealing with just the adjacency matrix on $\Z^d$. The presence of $\delta_{v_n,v_\ell}$ is related to the fact that the Floquet transform \eqref{e:u} is only a partial transform in the sense that it keeps $v_n$ fixed.

Define 
\begin{multline}\label{e:ft}
F_T(k,r;v_n,v_\ell): = \sum_{m\in \LL_N^d}\sum_{q,s,w=1}^\nu  \frac{1}{T}\int_0^T\ee^{\ii t[E_s(\frac{r_{\fb}+m_{\fb}}{N})-E_w(\frac{r_{\fb}}{N})]}\,\dd t \, \\
\times P_s\Big(\frac{r_{\fb}+m_{\fb}}{N}\Big)(v_n,v_q)a_m^{(N)}(v_q)P_w\Big(\frac{r_{\fb}}{N}\Big)(v_q,v_\ell)e_{m}^{(N)}(k)\,,
\end{multline}
where $a_m^{(N)}(v_q):= \langle \frac{\ee^{\frac{\ii m_{\fb}\cdot \bullet_\fa}{N}}}{N^{d/2}},a(v_q + \bullet_\fa)\rangle_{\ell^2(\LL_N^d)}$ are the Fourier coefficients of $a$.

\begin{lem}\label{lem:egorov}
We have 
\[
\frac{1}{T}\int_0^T \ee^{\ii tH_N}a\ee^{-\ii tH_N}\,\dd t = \opn(F_T).
\]
\end{lem}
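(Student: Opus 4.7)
The plan is a direct computation: start from a generic $\psi\in\ell^2(\Gamma_N)$, propagate it using the Floquet decomposition, multiply by $a$, propagate back, integrate in $t$, and match the result with the definition of $\opn(F_T)$ in \eqref{e:opnper}.

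First, from \eqref{e:uhu-} and the spectral decomposition $H(\theta_\fb) = \sum_{s=1}^\nu E_s(\theta_\fb) P_s(\theta_\fb)$, I would write
\[
\ee^{-\ii tH_N} = U^{-1}\Big(\mathop\oplus_{j\in\LL_N^d} \sum_{s=1}^\nu \ee^{-\ii tE_s(j_\fb/N)}P_s(j_\fb/N)\Big)U,
\]
and apply this to $\psi$ using the inversion formula for $U^{-1}$ recalled just above the lemma. This gives
\[
(\ee^{-\ii tH_N}\psi)(k_\fa+v_q) = \frac{1}{N^{d/2}}\sum_{r\in\LL_N^d}\sum_{w=1}^\nu\sum_\ell \ee^{-\ii tE_w(r_\fb/N)} P_w(r_\fb/N)(v_q,v_\ell)(U\psi)_r(v_\ell)\, \ee^{\ii k_\fa\cdot r_\fb/N}.
\]

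Next, I would multiply by $a(k_\fa+v_q)$ and apply $U$, expressing the $k$-sum via the Fourier coefficients $a_m^{(N)}(v_q)$. Concretely, after collecting the exponential $\ee^{\ii k_\fa\cdot(r-j)_\fb/N}$ coming from $U$ and the previous line, the $k$-sum is $\frac{1}{N^d}\sum_k \ee^{-\ii k_\fa\cdot(j-r)_\fb/N}a(k_\fa+v_q) = \frac{1}{N^{d/2}}a_{j-r}^{(N)}(v_q)$ (indices taken mod $N$). Setting $m = j-r$ gives
\[
(Ua\,\ee^{-\ii tH_N}\psi)_j(v_q) = \frac{1}{N^{d/2}}\sum_{m,w,\ell}\ee^{-\ii tE_w((j-m)_\fb/N)}P_w((j-m)_\fb/N)(v_q,v_\ell)(U\psi)_{j-m}(v_\ell)a_m^{(N)}(v_q).
\]
Now apply $\ee^{\ii tH_N}$ the same way, via another spectral decomposition with eigenvalues $E_s(j_\fb/N)$, eigenprojections $P_s(j_\fb/N)$, and the $U^{-1}$-inversion producing the factor $\ee^{\ii k_\fa\cdot j_\fb/N}/N^{d/2}$.

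At this stage I would perform the change of variables $r = j-m$ (so $j = r+m$, all mod $N$, preserving the sum over $\LL_N^d$). The exponential $\ee^{\ii k_\fa\cdot(r+m)_\fb/N}/N^d$ factors as $e_r^{(N)}(k)\,e_m^{(N)}(k)$, and I obtain
\[
(\ee^{\ii tH_N}a\,\ee^{-\ii tH_N}\psi)(k_\fa+v_n) = \sum_{r,\ell}(U\psi)_r(v_\ell)\,G_t(k,r;v_n,v_\ell)\,e_r^{(N)}(k),
\]
with
\[
G_t(k,r;v_n,v_\ell) = \sum_{m,s,q,w}\ee^{\ii t[E_s(\frac{r_\fb+m_\fb}{N})-E_w(\frac{r_\fb}{N})]}P_s\Big(\tfrac{r_\fb+m_\fb}{N}\Big)(v_n,v_q)a_m^{(N)}(v_q)P_w\Big(\tfrac{r_\fb}{N}\Big)(v_q,v_\ell)e_m^{(N)}(k).
\]
Averaging in $t$ moves the time integral inside and produces exactly $F_T(k,r;v_n,v_\ell)$ of \eqref{e:ft}; comparing with \eqref{e:opnper} identifies the result as $\opn(F_T)\psi$, proving the lemma.

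The only real difficulty is bookkeeping: keeping the sums over $(k,r,j,m)$ and the five dummy indices $(s,q,w,n,\ell)$ consistently ordered, and checking that the variable change $r=j-m$ is licit mod $N$ (it is, because all objects in sight depend on $j,r,m$ only through $j_\fb/N$, $r_\fb/N$, $m_\fb/N$, which are $N$-periodic). There is no analytic subtlety; once the bookkeeping is done, the equality $\frac{1}{T}\int_0^T \ee^{\ii tH_N}a\,\ee^{-\ii tH_N}\,\dd t = \opn(F_T)$ is an algebraic identity.
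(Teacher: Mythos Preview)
Your proposal is correct and follows essentially the same approach as the paper's proof: a direct computation using the Floquet decomposition \eqref{e:uhu-}, the Fourier expansion of $a$, the spectral decomposition of $H(\theta_\fb)$, and a mod-$N$ change of variables. The only cosmetic difference is that the paper works outside-in (applying $\ee^{\ii tH_N}$ first to reduce to computing $(Ua\,\ee^{-\ii tH_N}\psi)_r$, then handling $a$ and $\ee^{-\ii tH_N}$), whereas you work inside-out (applying $\ee^{-\ii tH_N}$ to $\psi$ first); the bookkeeping and the final change of variables are the same.
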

Although the definitions are somewhat long, the meaning is straightforward: this sandwich can be expressed in phase space. $F_T$ ``smoothes'' the function over different eigenvalues of the phase space operator, and $\opn$ gives the averaging under which this occurs. 
\begin{proof} 
First, we expand $\psi$ in order to relate it to the form of $\opn(F_T)$. 
\[
\psi(k_{\fa}+v_n) = (U^{-1}U\psi)(k_{\fa}+v_n) = \sum_{r\in \LL_N^d} (U\psi)_r(v_n)e_r^{(N)}(k) \,.
\]
Recalling \eqref{e:uhu-}, we obtain
\[
(H_N\psi)(k_{\fa}+v_n) = \sum_{r\in \LL_N^d} \Big[H\Big(\frac{r_{\fb}}{N}\Big)(U\psi)_r\Big](v_n) e_r^{(N)}(k) \,.
\]
Knowing this, we can now examine the operator $\ee^{\ii tH_N}a\ee^{-\ii tH_N}$ and expand over the various $(v_n,v_q)$. This yields
\[
(\ee^{\ii tH_N}a\ee^{-\ii tH_N}\psi)(k_{\fa}+v_n) = \sum_{r\in \LL_N^d}\sum_{q=1}^\nu \ee^{\ii tH(\frac{r_{\fb}}{N})}(v_n,v_q)(Ua\ee^{-\ii tH_N}\psi)_r(v_q)e_r^{(N)}(k)\,.
\]
Expanding $a(v_q+n_{\fa}) = \frac{1}{N^{d/2}} \sum_{m\in \LL_N^d} a_m^{(N)}(v_q)\ee^{\frac{\ii m_{\fb}\cdot n_{\fa}}{N}}$, we have
\begin{align*}
(Ua\ee^{-\ii tH_N}\psi)_r(v_q) &= \frac{1}{N^{d}}\sum_{n\in\LL_N^d}\sum_{m\in \LL_N^d} \ee^{\frac{-\ii r_{\fb}}{N}\cdot  n_{\fa}}a_m^{(N)}(v_q) \ee^{\frac{\ii m_{\fb}\cdot n_{\fa}}{N}}(\ee^{-\ii tH_N}\psi)(v_q+n_{\fa}) \\
&= \frac{1}{N^{d/2}} \sum_{m\in \LL_N^d} a_m^{(N)}(v_q) (U\ee^{-\ii tH_N}\psi)_{r-m}(v_q)\,.
\end{align*}
Here, $r-m$ is understood in $(\Z/N\Z)^d$. More precisely, if $r_i-m_i$ is negative for some $i$, it is replaced by $N+r_i-m_i$. But this last term can be further reduced as
\[
(U\ee^{-\ii tH_N}\psi)_{r-m}(v_q) = [\ee^{-\ii tH(\frac{r_{\fb}-m_{\fb}}{N})}(U\psi)_{r-m}](v_q) = \sum_{\ell=1}^\nu \ee^{-\ii tH(\frac{r_{\fb}-m_{\fb}}{N})}(v_q,v_\ell)(U\psi)_{r-m}(v_\ell) \,.
\]
Moreover, we can write $H(\theta_\fb) = \sum_{s=1}^\nu E_s(\theta_\fb) P_s(\theta_\fb)$ through its eigendecomposition. Similarly, $\ee^{\pm \ii tH(\theta_\fb)} = \sum_{s=1}^\nu \ee^{\pm \ii t E_s(\theta_\fb)}P_s(\theta_\fb)$. Applying this gives us
\begin{multline*}
(\ee^{\ii tH_N}a\ee^{-\ii tH_N}\psi)(k_{\fa}+v_n) = \frac{1}{N^{d/2}}\sum_{r,m\in \LL_N^d} \sum_{q,\ell,s,w=1}^\nu \ee^{\ii t E_s(\frac{r_{\fb}}{N})}P_s\Big(\frac{r_{\fb}}{N}\Big)(v_n,v_q) \\
\times a_m^{(N)}(v_q) \ee^{-\ii t E_w(\frac{r_{\fb}-m_{\fb}}{N})}P_w\Big(\frac{r_{\fb}-m_{\fb}}{N}\Big)(v_q,v_\ell)(U\psi)_{r-m}(v_\ell) e_r^{(N)}(k)\\
=\frac{1}{N^{d/2}}\sum_{r,m\in \LL_N^d} \sum_{q,\ell,s,w=1}^\nu  \ee^{\ii t[E_s(\frac{r_{\fb}+m_{\fb}}{N})-E_w(\frac{r_{\fb}}{N})]}P_s\Big(\frac{r_{\fb}+m_{\fb}}{N}\Big)(v_n,v_q)\\
\times a_m^{(N)}(v_q) P_w\Big(\frac{r_{\fb}}{N}\Big)(v_q,v_\ell)(U\psi)_r(v_\ell) e_{r+m}^{(N)}(k)
\end{multline*}
with $r+m$ again understood in $(\Z/N\Z)^d$.
Since $\frac{1}{N^{d/2}}e_{r+m}^{(N)}(k) = e_r^{(N)}(k)e_m^{(N)}(k)$, we get
\[
\frac{1}{T}\int_0^T \ee^{\ii tH}a\ee^{-\ii tH}\,\dd t \psi(k_{\fa}+v_n) = \sum_{r\in\LL_N^d}	\sum_{\ell=1}^\nu (U\psi)_r(v_\ell)  F_T(k,r;v_n,v_\ell) e_r^{(N)}(k)\,,
\]
with $F_T$ in \eqref{e:ft}. Therefore, according to \eqref{e:opnper}, $\frac{1}{T}\int_0^T \ee^{\ii tH_N}a\ee^{-\ii tH_N}\,\dd t = \opn(F_T)$. 
\end{proof}

\subsection{Step 2} Now we observe that if $E_s(\frac{r_{\fb}+m_{\fb}}{N})-E_w(\frac{r_{\fb}}{N})\neq 0$ for some $m\in\LL_N^d$ and $s,w\in \{1,\dots,\nu\}$, then the corresponding term in $F_T$ vanishes as $T\to \infty$. So define
\begin{multline}\label{e:blim}
b(k,r,v_n,v_\ell) = \sum_{m\in \LL_N^d}\sum_{q,s,w=1}^\nu \mathbf{1}_{S_r}(m,s,w)P_s\Big(\frac{r_{\fb}+m_{\fb}}{N}\Big)(v_n,v_q) \\
\times a_m^{(N)}(v_q) P_w\Big(\frac{r_{\fb}}{N}\Big)(v_q,v_\ell)e_{m}^{(N)}(k)\,,
\end{multline}
where $S_r = \{(m,s,w):E_s(\frac{r_{\fb}+m_{\fb}}{N})-E_w(\frac{r_{\fb}}{N}) = 0\}$. 

\begin{lem}
We have convergence in norm,\footnote{It is worthwhile to note that in the case of trees \cite{Ana,ALM,AS}, we usually evolve the dynamical system in time $T$, essentially up to the girth of the graph, take the size of the graph $N\to\infty$, then finally take $T\to\infty$. Here we first consider the equilibrium limit in $T$, then take $N\to\infty$ in the end of the proof.}
\[
\lim_{T\rightarrow \infty} \|\opn(F_T)-\opn(b)\|_{HS}^2=0.
\]
\end{lem}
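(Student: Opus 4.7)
The plan is to identify the Hilbert--Schmidt norm of $\opn(F)$ with (a rescaling of) the $\ell^2$ norm of the symbol $F$, and then exploit the fact that the time averages in \eqref{e:ft} converge pointwise to the indicator appearing in \eqref{e:blim}. First I would compute the kernel of $\opn(F)$. Inserting $(U\psi)_r(v_\ell) = N^{-d/2}\sum_{k'}\ee^{-2\pi\ii k'\cdot r/N}\psi(v_\ell+k'_\fa)$ (recall $\fa_i\cdot\fb_j=2\pi\delta_{ij}$) into \eqref{e:opnper} gives
\[
\opn(F)(k_\fa+v_n,\,k'_\fa+v_\ell) \,=\, \frac{1}{N^d}\sum_{r\in\LL_N^d} F(k,r;v_n,v_\ell)\,\ee^{2\pi\ii(k-k')\cdot r/N}.
\]
A Parseval identity in the $(r,k')$ DFT pair collapses the sum over $k'$ and yields
\[
\|\opn(F)\|_{HS}^2 \,=\, \frac{1}{N^d}\sum_{k,r\in\LL_N^d}\sum_{n,\ell=1}^{\nu} |F(k,r;v_n,v_\ell)|^2.
\]

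Next I would apply this identity to $F = F_T - b$. Setting
\[
c_T(r,m,s,w) \,:=\, \frac{1}{T}\int_0^T \ee^{\ii t[E_s(\frac{r_\fb+m_\fb}{N})-E_w(\frac{r_\fb}{N})]}\,\dd t \;-\; \mathbf{1}_{S_r}(m,s,w),
\]
comparison of \eqref{e:ft} and \eqref{e:blim} gives
\[
F_T(k,r;v_n,v_\ell)-b(k,r;v_n,v_\ell) \,=\, \sum_{m\in\LL_N^d} D_{T,m}(r,n,\ell)\,e_m^{(N)}(k),
\]
with $D_{T,m}(r,n,\ell) := \sum_{q,s,w} c_T(r,m,s,w)\,P_s(\tfrac{r_\fb+m_\fb}{N})(v_n,v_q)\,a_m^{(N)}(v_q)\,P_w(\tfrac{r_\fb}{N})(v_q,v_\ell)$. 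Since $(e_m^{(N)})_{m\in\LL_N^d}$ is an orthonormal basis of $\ell^2(\LL_N^d)$ in $k$, a second Parseval step gives $\sum_k|F_T-b|^2 = \sum_m|D_{T,m}|^2$, whence
\[
\|\opn(F_T)-\opn(b)\|_{HS}^2 \,=\, \frac{1}{N^d}\sum_{r,m\in\LL_N^d}\sum_{n,\ell=1}^{\nu}|D_{T,m}(r,n,\ell)|^2.
\]

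To conclude, observe that by construction $c_T(r,m,s,w)$ vanishes identically on the resonant set $S_r$, while on its complement it equals $(\ee^{\ii T\alpha}-1)/(\ii T\alpha)$ with $\alpha = E_s(\frac{r_\fb+m_\fb}{N})-E_w(\frac{r_\fb}{N})\neq 0$, hence tends to $0$ as $T\to\infty$. The entries of the eigenprojections $P_s(\theta_\fb)$ are bounded in modulus by $1$, the Fourier coefficients $a_m^{(N)}(v_q)$ are bounded uniformly in $T$ for fixed $N$, and the sum above runs over the finite set $\LL_N^d\times\LL_N^d\times\{1,\dots,\nu\}^2$. Dominated convergence on this finite set gives the claim.

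There is no real obstacle here: the statement reduces to finite-dimensional pointwise convergence once the HS norm has been identified with the $\ell^2$ norm of the symbol. The only care required is the bookkeeping of the two Parseval identities (one in $(r,k')$ to produce the HS formula, one in $(m,k)$ to exploit the orthogonality of $(e_m^{(N)})_m$) and the fact that at fixed $N$ all Floquet-theoretic data are bounded uniformly in $T$.
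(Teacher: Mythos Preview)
Your proof is correct and follows essentially the same route as the paper. Both derive the identity $\|\opn(F)\|_{HS}^2 = N^{-d}\sum_{k,r,n,\ell}|F(k,r;v_n,v_\ell)|^2$ (the paper via the basis $\phi_{r,v_\ell}^{(N)}=e_r^{(N)}\otimes\delta_{v_\ell}$, you via the explicit kernel and Parseval in $k'$), then use orthonormality of $(e_m^{(N)})_m$ in $k$ to pass to a sum over $m$, and finish by noting that off $S_r$ the time average equals $(\ee^{\ii T\alpha}-1)/(\ii T\alpha)\to 0$; the paper records the explicit bound $C_{N,a}/T^2$, whereas you simply invoke pointwise convergence over a finite index set, which is the same observation.
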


\begin{proof}

We use the special basis $\phi_{r,v_\ell}^{(N)} = e_r^{(N)}\mathop\otimes \delta_{v_\ell}$ of $\ell^2(V_N)$. That is, $\phi_{r,v_\ell}^{(N)}(k_{\fa}+v_q) = e_r^{(N)}(k)\delta_{v_\ell}(v_q) = \frac{\ee^{2\pi\ii r\cdot k/N}}{N^{d/2}}\delta_{v_\ell}(v_q)$. By \eqref{e:u}, $(U\phi_{r,v_\ell})_j(v_q) = \langle e_j^{(N)},e_r^{(N)}\rangle_{\ell^2(\LL_N^d)}\delta_{v_\ell}(v_q) = \delta_{j,r}\delta_{v_\ell}(v_q)$. By definition \eqref{e:opnper}, this implies $\opn(F)\phi_{r,v_\ell}^{(N)}(k_{\fa}+v_n) = F(k,r,v_n,v_\ell)e_r^{(N)}(k)$. 

Note that $\|F(\cdot,r,\star,v_\ell)e_n^{(N)}(\cdot)\|^2_{\ell^2(\Gamma_N)} = \frac{1}{N^d} \|F(\cdot,r,\star,v_\ell)\|^2_{\ell^2(\Gamma_N)}$, where $\cdot$ runs over $k\in\LL_N^d$ and $\star$ runs over $v_n\in V_f$. Therefore, 
\begin{equation}\label{eq:opnred}
\|\opn(F)\|_{HS}^2 = \sum_{r\in\LL_N^d}\sum_{\ell=1}^\nu \|\opn(F)\phi_{r,v_\ell}^{(N)}\|^2 = \frac{1}{N^d}\sum_{r\in\LL_N^d}\sum_{\ell=1}^\nu \|F(\cdot,r,\star,v_\ell)\|^2    
\end{equation}
To prove the lemma, we should thus examine the norm of the symbols,
\begin{multline*}
\big\|F_T(\cdot,r,\star,v_\ell)-b(\cdot,r,\star,v_\ell)\big\|^2 = \bigg\| \sum_{m\in\LL_N^d}\sum_{q,s,w=1}^\nu \mathbf{1}_{S^c_r}(m,s,w) \frac{\ee^{\ii T[E_{s}(\frac{r_{\fb}+m_{\fb}}{N})-E_w(\frac{r_{\fb}}{N})]}-1}{T[E_{s}(\frac{r_{\fb}+m_{\fb}}{N})-E_w(\frac{r_{\fb}}{N})]}\\
\times P_s\Big(\frac{r_{\fb}+m_{\fb}}{N}\Big)(\star,v_q)a_m^{(N)}(v_q)P_w\Big(\frac{r_{\fb}}{N}\Big)(v_q,v_\ell)e_m^{(N)}(\cdot)\bigg\|^2\,.
\end{multline*}
This implies that
\begin{multline*}
\|\opn(F_T)-\opn(b)\|_{HS}^2= \frac{1}{T^2N^d}\sum_{r,m\in \LL_N^d}\sum_{\ell=1}^\nu \bigg\| \sum_{q,s,w=1}^\nu \mathbf{1}_{S^c_r}(m,s,w) \frac{\ee^{\ii T[E_{s}(\frac{r_{\fb}+m_{\fb}}{N})-E_w(\frac{r_{\fb}}{N})]}-1}{E_{s}(\frac{r_{\fb}+m_{\fb}}{N})-E_w(\frac{r_{\fb}}{N})}\\
\times P_s\Big(\frac{r_{\fb}+m_{\fb}}{N}\Big)(\star,v_q)a_m^{(N)}(v_q)P_w\Big(\frac{r_{\fb}}{N}\Big)(v_q,v_\ell)\bigg\|^2_{\C^\nu} \le \frac{C_{N,a}}{T^2}\,,
\end{multline*}
where $C_{N,a}$ is finite for any $N$ and is independent of $T$. Taking $T\to \infty$ yields that $\opn(F_T)\to \opn(b)$ in HS norm.
\end{proof}

\subsection{Step 3}
We are thus reduced to studying $\opn(b)$ with $b$ given in \eqref{e:blim}.

Note that $\sum_{p=1}^\nu P_p(\theta) =\mathrm{id}$, so $\sum_{p=1}^\nu P_p(\theta)(v_i,v_j) = \delta_{v_i,v_j}$. Therefore, if we remove the $\mathbf{1}_{S_r}$ term, \eqref{e:blim} becomes
\[
\sum_{m\in\LL_N^d} a_m^{(N)}(v_n) e_m^{(N)}(k) \delta_{v_n,v_\ell} = a(k_{\fa}+v_n)\delta_{v_n,v_\ell}
\]
and the corresponding $\opn$ applied to $\psi$ simply gives $a(k_{\fa}+v_n)\psi(k_{\fa}+v_n)$. Hence, $\opn(b)\psi$ is just $a\psi$ but with many suppressed Floquet modes.

Let $\overline{a}$ be the part of $b$ corresponding to $m=0$.
Let $\tilde{a} = a - \opn(\overline{a})$ and $c = b - \overline{a}$. Then collecting the previous steps, we have
\begin{multline*}
\sum_{u\in\Gamma_N} |\langle \psi_u^{(N)}, \tilde{a}\psi_u^{(N)}\rangle|^2 = \sum_{u\in\Gamma_N} \lim_{T\to\infty} |\langle \psi_u^{(N)}, \opn(F_T- \overline{a})\psi_u^{(N)}\rangle|^2 \\
\le \sum_{u\in \Gamma_N}\lim_{T\to\infty} 2(\|\opn(c)\psi_u^{(N)}\|^2 + \|\opn(F_T - b)\psi_u^{(N)}\|^2) = 2\|\opn(c)\|^2_{HS}\,.
\end{multline*}

\begin{proof}[Proof of \eqref{e:main}]
It now suffices to show that $\lim\limits_{N\to\infty}\frac{1}{|\Gamma_N|}\|\opn(c)\|_{HS}^2 = 0$. Using \eqref{eq:opnred}, we have $\frac{1}{|\Gamma_N|}\|\opn(c)\|_{HS}^2 =\frac{1}{\nu N^{2d}}\sum_{r\in \LL_N^d}\sum_{\ell=1}^\nu \|c(\cdot,r,\star,v_\ell)\|_{\ell^2(V_N)}^2$.

We thus consider
\begin{multline*}
\frac{1}{N^{2d}} \sum_{r\in\LL_N^d}\sum_{\ell=1}^\nu \bigg\|\sum_{m\neq 0}\sum_{q,s,w=1}^\nu \mathbf{1}_{S_r}(m,s,w)P_s\Big(\frac{r_{\fb}+m_{\fb}}{N}\Big)(\star,v_q) a_m^{(N)}(v_q) P_w\Big(\frac{r_{\fb}}{N}\Big)(v_q,v_\ell)e_{m}^{(N)}(\cdot)\bigg\|^2\\
= \frac{1}{N^{2d}} \sum_{r\in\LL_N^d}\sum_{\ell=1}^\nu \sum_{m\neq 0} \sum_{n=1}^\nu\Big|\sum_{q,s,w=1}^\nu \mathbf{1}_{S_r}(m,s,w)P_s\Big(\frac{r_{\fb}+m_{\fb}}{N}\Big)(v_n,v_q) a_m^{(N)}(v_q) P_w\Big(\frac{r_{\fb}}{N}\Big)(v_q,v_\ell)\Big|^2
\end{multline*}

Denote $P_s:=P_s(\frac{r_{\fb}+m_{\fb}}{N})$, $P_w:=P_w(\frac{r_{\fb}}{N})$ and expand the square modulus to get
\begin{multline}\label{e:expanc}
\frac{1}{N^{2d}} \sum_{r\in\LL_N^d}\sum_{\ell=1}^\nu \sum_{m\neq 0} \sum_{n=1}^\nu \sum_{q,s,w,q',s',w'=1}^\nu \mathbf{1}_{S_r}(m,s,w)P_s(v_n,v_q) a_m^{(N)}(v_q) P_w(v_q,v_\ell)\\
\times \mathbf{1}_{S_r}(m,s',w')\overline{P_{s'}(v_n,v_{q'}) a_m^{(N)}(v_{q'}) P_{w'}(v_{q'},v_\ell)}\,.
\end{multline}

But
\[
\sum_{n=1}^\nu P_s(v_n,v_q)\overline{P_{s'}(v_n,v_{q'})} = \sum_{n=1}^\nu (P_s\delta_{v_q})(v_n)\overline{(P_{s'}\delta_{v_{q'}})(v_n)} = \langle P_{s'}\delta_{v_{q'}},P_s\delta_{v_q}\rangle \,.
\]
Similarly, $\sum_{\ell=1}^\nu P_w(v_q,v_\ell)\overline{P_{w'}(v_{q'},v_\ell)}  = \langle P_w\delta_{v_q},P_{w'}\delta_{v_{q'}}\rangle$. If $E_s\neq E_{s'}$ or $E_w\neq E_{w'}$, these scalar products vanish. So \eqref{e:expanc} is concentrated on the $s',w'$ for which $E_{s'}=E_s$ and $E_{w'}=E_w$, in which case $\mathbf{1}_{S_r}(m,s,w) = \mathbf{1}_{S_r}(m,s',w')$ and we obtain
\begin{multline*}
\frac{1}{N^{2d}} \sum_{r\in\LL_N^d} \sum_{m\neq 0} \sum_{q,s,w,q',s',w'=1}^\nu \mathbf{1}_{S_r}(m,s,w)\langle P_{s'}\delta_{v_{q'}},P_s\delta_{v_q}\rangle a_m^{(N)}(v_q) \langle P_w\delta_{v_q},P_{w'}\delta_{v_{q'}}\rangle\overline{a_m^{(N)}(v_{q'})}\\
=\frac{1}{N^{2d}}\sum_{\substack{m\in\LL_N^d\\m\neq 0}}\sum_{q,q'=1}^\nu \overline{a_m^{(N)}(v_{q'})}a_m^{(N)}(v_q)\sum_{r\in\LL_N^d}\sum_{s,w,s',w'=1}^\nu \mathbf{1}_{A_m}(r,s,w)\langle P_{s'}\delta_{v_{q'}},P_s\delta_{v_q}\rangle \langle P_w\delta_{v_q},P_{w'}\delta_{v_{q'}}\rangle\,,
\end{multline*}
where $A_m = \{(r,s,w):E_s(\frac{r_{\fb}+m_{\fb}}{N})-E_w(\frac{r_{\fb}}{N})=0\}$ and we used that $(m,s,w)\in S_r\iff (r,s,w)\in A_m$. By hypothesis \eqref{e:eigenass}, we know that
\begin{equation}\label{e:pervar}
\lim_{N\rightarrow\infty} \sup_{\substack{m\in\LL_N^d\\m\neq 0}}\frac{|A_m|}{N^d} =0.
\end{equation}
Since $|\langle P_{s'}\delta_{v_{q'}},P_s\delta_{v_q}\rangle|\le 1$, it follows that the above is
\[
o_N(1) \frac{1}{N^d} \sum_m\sum_{q,q'=1}^\nu \overline{a_m^{(N)}(v_{q'})}a_m^{(N)}(v_q)=o_N(1)\frac{1}{N^d}\sum_{q,q'=1}^\nu \langle a(\cdot_\fa+v_{q'}),a(\cdot_\fa+v_q)\rangle_{\ell^2(\LL_N^d)} = o_N(1)
\]
using $|a(n_\fa+v_q)|\le 1$. This completes the proof of \eqref{e:main}.
\end{proof}
\subsection{Step 4}\label{sec:step4}
Let us now explore the main term $\overline{a}$. Recall that it corresponds to $m=0$ in \eqref{e:blim}. Having $(0,s,w)\in S_r$ means that $E_s(\frac{r_{\fb}}{N}) = E_w(\frac{r_{\fb}}{N})$. This is automatically true for $w=s$. Thus,
\begin{align}\label{e:abar}
\overline{a}&=\sum_{q,s=1}^\nu a_0^{(N)}(v_q)e_{0}^{(N)}(k) P_s\Big(\frac{r_{\fb}}{N}\Big)(v_n,v_q) \bigg(P_s\Big(\frac{r_{\fb}}{N}\Big)(v_q,v_\ell) + \sum_{\substack{w\neq s\\ E_s=E_w}} P_w\Big(\frac{r_\fb}{N}\Big)(v_q,v_\ell)\bigg) \nonumber\\
&= \sum_{q=1}^\nu \langle a(\cdot+v_q)\rangle \sum_{s=1}^{\nu'}P_{E_s}\Big(\frac{r_{\fb}}{N}\Big)(v_n,v_q) P_{E_s}\Big(\frac{r_{\fb}}{N}\Big)(v_q,v_\ell)  \,,
\end{align}
where $\langle a(\cdot+v_q)\rangle = \frac{1}{N^d}\sum_{n\in \LL_N^d}a(n_{\fa}+v_q)$, $\nu'\le \nu$ is the number of \emph{distinct} eigenvalues of $H(\theta_\fb)$ and $P_{E_s}(\theta_\fb) = \sum_{E_w=E_s} P_w(\theta_\fb)$ is the orthogonal projection onto the eigenspace corresponding to $E_s(\theta_\fb)$. In general, $\nu'$ is independent of $\theta_\fb$, i.e. the multiplicity of $E_s(\theta_\fb)$ is fixed, expect perhaps for finitely many exceptional $\theta_\fb$ (see e.g. \cite[Chapter II.1.1]{Kato}).

\begin{proof}[Proof of \eqref{e:asavboun}-\eqref{e:avconverge}]
By the definition of $\opn$, we can write out 
\begin{align*}
\langle \psi,\opn(\overline{a})\psi\rangle &= \sum_{k\in\LL_N^d}\sum_{v_n\in V_f} \overline{\psi(k_{\fa}+v_n)}[\opn(\overline{a})\psi](k_{\fa}+v_n) \\
&= \sum_{v_n\in V_f}\sum_{r\in\LL_N^d}\sum_{\ell=1}^\nu (U\psi)_r(v_\ell) \overline{a}(r,v_n,v_\ell) \sum_{k\in \LL_N^d} \overline{\psi(k_{\fa}+v_n)} e_r^{(N)}(k)\,.
\end{align*}

But $\sum_k \overline{\psi(k_{\fa}+v_n)} e_r^{(N)}(k) = \overline{(U\psi)_r(v_n)}$. Thus,
\begin{align*}
\langle \psi, \opn(\overline{a})\psi\rangle &= \sum_{v_n\in V_f}\sum_{r\in\LL_N^d}\sum_{\ell=1}^\nu (U\psi)_r(v_\ell)\overline{(U\psi)_r(v_n)} \overline{a}(r,v_n,v_\ell)\\
&= \sum_{q=1}^\nu \sum_{r\in\LL_N^d}\sum_{\ell=1}^\nu\sum_{s=1}^{\nu'} P_{E_s}(v_q,v_\ell) (U\psi)_r(v_\ell)\sum_{n=1}^\nu \overline{P_{E_s}(v_q,v_n)(U\psi)_r(v_n)}\langle a(\cdot+v_q)\rangle \\
&= \sum_{q=1}^\nu \langle a(\cdot+v_q)\rangle \sum_{r\in \LL_N^d} \sum_{s=1}^{\nu'}[P_{E_s}(U\psi)_r](v_q) \overline{[P_{E_s}(U\psi)_r](v_q)} \,.
\end{align*}
where $P_s = P_s(\frac{r_{\fb}}{N})$. We have shown that
\begin{equation}\label{e:opnabarpsi}
\langle \psi, \opn(\overline{a})\psi\rangle = \sum_{q=1}^\nu \langle a(\cdot + v_q)\rangle \sum_{r\in \LL_N^d} \sum_{s=1}^{\nu'}\Big| \Big[P_{E_s}\Big(\frac{r_{\fb}}{N}\Big)(U\psi)_r\Big](v_q)\Big|^2 \,.
\end{equation}

In the special case where $\langle a(\cdot+v_q)\rangle = \langle a(\cdot +v_1)\rangle$ for $q=1,\dots,\nu$, the above reduces to
\[
\langle a(\cdot +v_1)\rangle  \sum_{r\in \LL_N^d} \sum_{s=1}^{\nu'} \|P_{E_s}(U\psi)_r\|_{\C^\nu}^2 = \langle a(\cdot + v_1)\rangle \sum_{r\in \LL_N^d} \|(U\psi)_r\|_{\C^\nu}^2 = \langle a(\cdot+v_1)\rangle \|\psi\|^2 \,.
\]
In particular, $\psi = \psi_u^{(N)}$ gives the uniform average $\langle a(\cdot+v_1)\rangle = \frac{1}{N^d}\sum_{n\in \LL_N^d}a(n_\fa+v_1) = \frac{1}{|\Gamma_N|}\sum_{v\in \Gamma_N} a(v)$. This proves \eqref{e:avconverge}.

In the same way, if $a$ is real-valued, we deduce \eqref{e:asavboun} from \eqref{e:opnabarpsi}.

We finally show that if $a$ is locally constant, then it suffices to ask that \eqref{e:eigenass} holds for $w=s$, $1\le s\le \nu$. In fact, going back to the calculation in Step 3 preceding \eqref{e:pervar}, we note that $\sum_{s'}\langle P_{s'}\delta_{v_{q'}},P_s\delta_{v_q}\rangle = \langle P_{E_s}\delta_{v_{q'}},P_s\delta_{v_q}\rangle = P_s(v_{q'},v_q)$. Similarly $\sum_{w'} \langle P_w \delta_{v_q},P_{w'}\delta_{v_{q'}}\rangle = P_w(v_q,v_{q'})$.

Since $\overline{a_m^{(N)}(v_{q'})}a_m^{(N)}(v_q)=|a_m^{(N)}(v_1)|^2 =:|a_m^{(N)}|^2$ is independent of $q,q'$, we can now sum the kernels, $\sum_{q,q'} P_s(v_{q'},v_q)P_w(v_q,v_{q'}) = \sum_q (P_wP_s)(v_q,v_q) = \delta_{s,w}$. We thus see that when $a$ is locally constant, then $\frac{1}{N^d}\|\opn(c)\|_{HS}^2$ reduces to
\begin{equation}\label{e:cforlocalcon}
\frac{1}{N^{2d}}\sum_{\substack{m\in\LL_N^d\\m\neq 0}}|a_m^{(N)}|^2\sum_{r\in\LL_N^d}\sum_{s=1}^\nu \mathbf{1}_{A_m}(r,s,s) .
\end{equation}
If \eqref{e:eigenass} holds for $w=s$, $1\le s\le \nu$, the above is $o_N(1)\frac{1}{N^d}\|a\|^2 = o_N(1)$ as required.
\end{proof}

\begin{rem}[Necessity of the assumptions]\label{rem:nec}
In the previous proof, the only inequality that we used is in Step 3, when bounding the variance by the Hilbert-Schmidt norm of the evolved observable. This bound is standard in proofs of quantum ergodicity, it seems unlikely that we can avoid it. On the other hand, the decay of the Hilbert-Schmidt norm almost necessitates \eqref{e:eigenass}. In fact, if in Step 3 we consider, for fixed $\widehat{m}\neq 0$, the observable $a(k_\fa+v_q) = \ee^{\frac{\ii \widehat{m}_\fb\cdot k_\fa}{N}}\otimes \mathbf{1}_\nu(v_q)$, which is locally constant, then $a_m^{(N)} = N^{d/2}$ if $m=\widehat{m}$ and zero otherwise, so \eqref{e:cforlocalcon} reduces to $\frac{\#\{(r,s):E_s(\frac{r_\fb+\widehat{m}_\fb}{N})-E_s(\frac{r_\fb}{N})=0\}}{N^d}$. This quantity therefore has to be $o_N(1)$ for every $\widehat{m}\neq 0$ if we want the HS norm to go to zero $\forall a$. For $w\neq s$ we need an assumption. Namely, if we can choose the normalized eigenvectors $f_s^{\theta_\fb}$ corresponding to $E_s(\theta_\fb)$ such that for some $v_\ell$, we have $f_s^{\theta_\fb}(v_\ell)\neq 0$ for all $s,\theta$, then by taking $a(k_\fa+v_q) = \ee^{\frac{\ii \widehat{m}_\fb\cdot k_\fa}{N}} \delta_{v_\ell}(v_q)$, we see that $\frac{|A_m|}{N^d}=o_N(1)$ must hold $\forall m\neq 0$ for the HS norm to go to zero (argue similarly in the calculation preceding \eqref{e:pervar}).
\end{rem}

\begin{rem}\label{rem:I}
The main theorem holds more generally if instead of summing over the whole spectrum in \eqref{e:main}, we sum over eigenvalues in some interval $I$, in which case we only need \eqref{e:eigenass} to hold on $I$. To see this, we slightly modify the proof as follows~: in \eqref{e:fisvar}, we insert a spectral projection $\chi_I(H_N)$, so the operator is now $\frac{1}{T}\int_0^T \ee^{\ii tH_N}a\ee^{-\ii tH_N}\chi_I(H_N)\,\dd t$. In \eqref{e:ft}, we replace the sum over all $w$ by the sum over $E_w(\frac{r_\fb}{N})\in I$. In fact, by adding the spectral projection through the proof of Lemma~\ref{lem:egorov}, we now get $(U\ee^{-\ii tH}\chi_I(H)\psi)_{r-m} = \ee^{-\ii tH(\frac{r_\fb-m_\fb}{N})}\chi_I(H(\frac{r_\fb-m_\fb}{N}))(U\psi)_{r-m}$. Consequently, the limiting symbol $b$ now also sums over $E_w(\frac{r_\fb}{N})\in I$ instead. The proofs carry over \textit{mutatis mutandis}. 

In the end, the symbol $\overline{a}$ in \eqref{e:abar} now sums over $E_s(\frac{r_\fb}{N})\in I$. This gives the illusory impression that the weighted average changes, which of course makes no sense as the term $\langle \psi_u^{(N)},a\psi_u^{(N)}\rangle$ should approach a fixed quantity whether the Ces\`aro mean is over the whole spectrum or not. However the quantity $\langle \psi_u^{(N)},\opn(\overline{a})\psi_u^{(N)}\rangle$ is indeed the same as before. In fact, if we know that $\lambda_u^{(N)}\in I$, we may again insert a projector so that $(U\psi_u^{(N)})_r$ in \eqref{e:opnabarpsi} becomes $(U\chi_I(H_N)\psi_u^{(N)})_r = \chi_I(H(\frac{r_\fb}{N}))(U\psi_u^{(N)})_r$, so the sum over all $E_s$ in \eqref{e:opnabarpsi} reduces to the sum over $E_s(\frac{r_\fb}{N})\in I$, which is what we obtained when averaging over $I$.
\end{rem}

\section{Special classes of graphs}
In this section we discuss the validity of assumption \eqref{e:eigenass} for various classes of graphs. We start with graphs having $\nu=1$, proving Theorem~\ref{thm:mainnu1}. We then discuss Cartesian products, proving Proposition~\ref{prp:cart}, and conclude with graph decorations, tensor and strong products, proving Propositions~\ref{prp:zno} and \ref{prp:counterass} along the way.

\subsection{Scalar fibers}\label{sec:scalfib}
Step 4 in \S\ref{sec:step4} shows that if $\nu=1$, then $\langle \psi_u^{(N)},\opn(\overline{a})\psi_u^{(N)}\rangle = \langle a\rangle$. To prove Theorem~\ref{thm:mainnu1}, it remains to establish \eqref{e:eigenass} in this context. Here of course $w=s$.

If $\nu=1$, then the graph is $2D$-regular for some $D\in \N$. In fact, $V_f = \{o\}$ for some $o\in \cC_{\mathfrak{a}}$, and $\Gamma = \Z_{\fa}^d + \{o\}$. If $u\sim o$, then $u = \lfloor u\rfloor_{\fa} + o$. By translation invariance we have $u-n_\fa\sim o-n_\fa$. Applying this to $n_\fa = \lfloor u\rfloor_\fa$ gives $o \sim o - \lfloor u\rfloor _\fa$. We may thus arrange the neighbors of $o$ into $\cN_o^+ \cup \cN_o^-$, where $\cN_o^+ = \{o+n_\fa\}$ and $\cN_o^- = \{o-n_\fa\}$, for some $D$ nonzero integers $n_\fa = \sum_{i=1}^d n_i \fa_i$ with $n_i\in \{0,1,\dots\}$. Since the rest of the graph is just a periodic copy of the star around $o$, we see it is $2D$-regular.

If $\nu=1$, then the potential $Q$ must be constant. We assume without loss of generality that $Q=0$.

\begin{proof}[Proof of Theorem~\ref{thm:mainnu1}]
The $\nu\times \nu$ matrix $H(\theta_\fb)$ is now just a scalar given by
\[
H(\theta_\fb) = \sum_{u\sim o} \ee^{\ii\theta_\fb\cdot \lfloor u\rfloor_\fa} = 2\sum_{p=1}^D \cos(2\pi \theta \cdot n^{(p)}) 
\]
for some $n^{(1)},\dots,n^{(D)}\in \{0,1,\dots\}^d\setminus\{0\}$. We only have one eigenvalue here given by $E(\theta_\fb) = H(\theta_\fb)$. So we should show that for any fixed $m\neq 0$, the equation
\begin{equation}\label{eq:tosolve}
E\Big(\frac{r_\fb + m_\fb}{N}\Big) - E\Big(\frac{r_\fb}{N}\Big) = 2\sum_{p=1}^D \Big(\cos\Big(2\pi \frac{(r+m) \cdot n^{(p)}}{N}\Big) - \cos\Big(2\pi \frac{r \cdot n^{(p)}}{N}\Big) \Big) = 0
\end{equation}
has $o(N^d)$ solutions in $r\in\LL_N^d$. By the sum to product formula, we are lead to consider the zeroes of
\begin{equation}\label{e:rooti}
f_m\Big(\frac{r}{N}\Big) := \sum_{p=1}^D \sin \Big(\pi \frac{m\cdot n^{(p)}}{N}\Big) \sin\Big(\pi\frac{(2r+m)\cdot n^{(p)}}{N}\Big) \,.
\end{equation}
For this, we consider the projection of the surface $A_m = \{r\in\LL_N^d :f_m(\frac{r}{N})=0\}$ onto a vector $\phi\in \LL_N^d$ to be specified. More precisely, given $j\in\LL_N^d$, we write $j = r+y \phi$, for $r\in\phi^\bot$ and $y = \frac{\langle \phi,j\rangle}{\|\phi\|^2}$. Note that $y\in [0,N-1]$ since $0\le \sum\phi_i j_i\le (N-1)\sum \phi_i \le (N-1)\sum \phi_i^2$ for $\phi\in\LL_N^d$. We will show that for fixed $r\in \phi^\bot$, there are at most $M$ points $y$ such that $f_m(\frac{j}{N})=0$, with $M$ independent of $N$. By varying $r\in\phi^\bot$, it follows that $|A_m|\le M| \phi^\bot| \le MN^{d-1} = o(N^d)$ as required.

We therefore consider the function
\[
g_{m,r}(x) = f_m\Big(\frac{r}{N}+ x\phi\Big) = 0
\]
for $x\in [0,1)$.  Denote
\begin{equation}\label{e:abc}
\alpha_p = \sin\Big(\pi \frac{m\cdot n^{(p)}}{N}\Big)\,, \qquad \beta_p = \pi \frac{(2r + m)\cdot n^{(p)}}{N}\,, \qquad \gamma_p = 2\phi \cdot n^{(p)}\,.
\end{equation}
Then 
\[
g_{m,r}(x) = \sum_{p=1}^D \alpha_p\sin(\beta_p+\pi \gamma_p x) = \frac{1}{2\ii}\sum_{p=1}^D \alpha_p(\ee^{\ii \beta_p}\ee^{\ii \pi \gamma_p x} - \ee^{-\ii \beta_p}\ee^{-\ii \pi\gamma_px})\,.
\]
Setting $z = \ee^{\ii \pi x}$, this reduces to
\[
\tilde{g}_{m,r}(z) = \sum_{p=1}^D (\rho_p z^{\gamma_p} + \rho_p' z^{-\gamma_p})
\]
for some $\rho_p,\rho_p'\in \C$. By definition \eqref{e:abc}, $\gamma_p$ is an integer. We thus seek the solutions of $\tilde{g}_{m,r}(z)$ on the unit circle. We have $\tilde{g}_{m,r}(z) = 0$ iff $\sum_{p=1}^D (\rho_p z^{\gamma_\star+\gamma_p} + \rho_p'z^{\gamma_\star-\gamma_p}) = 0$, where $\gamma_\star = \max_p\gamma_p$. This is a polynomial in $z$. By the fundamental theorem of algebra, if this polynomial is nontrivial, it has at most $M=2\max_p \gamma_p$ roots. In turn, we have at most $M$ solutions $x_j$ for $g_{m,r}(x)=0$, and the proof of \eqref{e:pervar} is complete (recall the discussion after \eqref{e:rooti}).

So it remains to check the polynomial $z^{\gamma_*}\tilde{g}_{m,r}(z) = \sum_{p=1}^D (\rho_p z^{\gamma_\star+\gamma_p} + \rho_p'z^{\gamma_\star-\gamma_p})$ is nontrivial. For this, we check that
\begin{enumerate}[1.]
\item At least one $\rho_p$ is nonzero. 
\item We can choose $\phi$ such that $\gamma_p\neq  \gamma_{p'}$ for $p\neq p'$. This way, no two terms in the sum have the same power, so no cancellation can occur. Note that no $\gamma_p$ is zero, so no cancellation can occur from $\rho_{p'}=-\rho_p$.
\end{enumerate}

\paragraph{Proof of 1.}
Since $m\neq 0$, we have $m_j\neq 0$ for some $j$. Note that $o+\fa_j\in \Gamma$ by translation invariance. Since $\Gamma$ is connected, some integer combination $o+\sum_{p=1}^D k_p n^{(p)}_\fa$ of the neighbors of $o$ is $o+\fa_j$, where $k_p\in \Z$ is the number of adjacencies of type $n^{(p)}$ traversed on the geodesic from $o$ to $o+\fa_j$. It follows that
\begin{equation}\label{e:sinsum}
\sin\Big(\pi \frac{m}{N}\cdot \sum_{p=1}^D k_p n^{(p)}\Big)  =\sin\Big(\frac{m_\fb}{2N}\cdot \sum_{p=1}^D k_p n_\fa^{(p)}\Big)  = \sin\Big(\frac{m_\fb}{2N}\cdot \fa_j\Big)  =  \sin \frac{\pi m_j}{N} \neq 0\,.
\end{equation}
If we had $\sin (\pi \frac{m\cdot n^{(p)}}{N})=0$ for all $p$, we would have $\frac{m\cdot n^{(p)}}{N}\in \Z$ for all $p$ and thus $\frac{m}{N}\cdot \sum_{p=1}^D k_p n^{(p)}\in \Z$, contradicting \eqref{e:sinsum}. Thus, $\alpha_p \neq 0$ for at least one $p$. This completes the proof.

\paragraph{Proof of 2.}
We need $\phi$ to avoid the subspaces $V_{p,p'}=\{v: v\cdot n^{(p)} = v\cdot n^{(p')}\}$ for all $D(D-1)$ pairs of $p\neq p'$. Each of these is $d-1$ dimensional, since the $n^{(p)}$ are distinct. 

It is not difficult to see that such a $\phi$ exists. However, we give a quite explicit construction below, which in turn gives an explicit bound on $M$. 

Suppose we give a list of $\ell_D=(d-1)D(D-1)+1$ vectors in $\LL_N^d$ such that any $d$ of them forms a basis. Then each of the subspaces $V_{p,p'}$ can only contain at most $d - 1$ of our vectors, therefore
there must be some vector not contained in any of the subspaces and we are done.

A possible list is given by the row vectors
\[
\left[\begin{array}{ccccc}
     1&1&1&\cdots& 1\\
     1&2&2^2&\cdots&2^{d-1} \\
     1&3&3^2&\cdots&3^{d-1}\\
     \vdots&\vdots&\ddots&\vdots\\
     1&\ell_D&\ell_D^2&\cdots &\ell_D^{d-1}
\end{array}\right] .
\]
Indeed, any subset of $d$ of these vectors, say the ones from the $x_1,\dots , x_d$ rows, forms a Vandermonde matrix with determinant $\prod_{i<j}(x_i - x_j)$, which is nonzero, meaning any set of $d$ vectors is linearly independent. This finishes the proof.
\end{proof}
We may obtain an upper bound over $M = 2\max_p \gamma_p$. In fact, the worst case is if the last vector in the list is the first $\phi$ that avoids all $V_{p,p'}$. In this case, $\gamma_p = 2\phi \cdot n^{(p)} \le 2d\ell_D^{d-1}q$, where $q = \max_{i,p} n_i^{(p)}$, so $M\le 4d \ell_D^{d-1}q$.

\subsection{The case of Cartesian products}\label{sec:cartesian}
The Cartesian product $\Gamma \mathop\square G$ of $\Gamma$ and $G$ is the graph with vertex set $V(\Gamma)\times V(G)$, in which $(u,v)\sim (u',v')$ if either
\begin{enumerate}[(i)]
\item ($u=u'$ and $v\sim v'$),
\item or ($u\sim u'$ and $v=v'$). 
\end{enumerate}

For example, to construct $\Z\mathop\square P_2$, where $P_2$ is the $2$-path, replace each vertex of $\Z$ with a $2$-path, and connect edges between matching vertices. The result is an infinite ladder.
\begin{figure}[h!]
\begin{center}
\setlength{\unitlength}{1cm}
\thicklines
\begin{picture}(1,1)(-1,-1)
   \put(-5,0){\line(1,0){10}}
	 \put(-5,-1){\line(1,0){10}}
	 \put(-3,-1){\line(0,1){1}}
	 \put(-1,-1){\line(0,1){1}}
	 \put(1,-1){\line(0,1){1}}
	 \put(3,-1){\line(0,1){1}}
	 \put(-1,-1){\circle*{.2}}
	 \put(-1,0){\circle*{.2}}
	 \put(-3,-1){\circle*{.2}}
	 \put(-3,0){\circle*{.2}}
	 \put(1,-1){\circle*{.2}}
	 \put(1,0){\circle*{.2}}
	 \put(3,-1){\circle*{.2}}
	 \put(3,0){\circle*{.2}}
\end{picture}
\caption{The ladder graph, $\Z\mathop\square P_2$.}\label{fig:lad}
\end{center}
\end{figure}

Similarly, for $\Z\mathop\square C_p$, where $C_p$ is a $p$-cycle, replace each vertex of $\Z$ with a $p$-cycle, and connect edges between matching vertices (Figure~\ref{fig:cyl}).  The graph is $4$-regular, naturally embedded in $\R^3$, and is clearly $\Z$-periodic with fundamental crystal $V_f = C_p$.
We may endow $C_p$ with a potential $Q$ and copy it in each layer. Then $H(\theta_\fb)f(u,v) = 2\cos 2\pi\theta f(u,v) + f(u,v+1) + f(u,v-1) + Q_v f(u,v)$. In other words, $H(\theta_\fb) = \cA_\Z(\theta_\fb)\mathop\otimes I + I\otimes H_{G_F}$. The eigenvalues are thus $\{2\cos 2\pi\theta + \mu_j\}$, where $\{\mu_j\}_{j=1}^p$ are the eigenvalues of the Schr\"odinger operator of the $p$-cycle. These observations are general~:

\begin{lem}
If $\Gamma$ is a periodic graph with $\nu=1$ and $G_F$ is a finite graph endowed a potential $Q$, then $\Gamma \mathop\square G_F$ is a periodic graph with fundamental crystal $V_f = G_F$ and
\begin{equation}\label{e:cartop}
H_{\Gamma\mathop\square G_F}(\theta_\fb) = H_\Gamma(\theta_\fb)\otimes I + I\otimes H_{G_F}\,.
\end{equation}
\end{lem}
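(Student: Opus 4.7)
The plan is a direct verification from the definition of a periodic Schr\"odinger operator together with formula \eqref{e:htheta} for the Floquet matrix. First I would check that the $\Z^d$-action $(u,w)\mapsto(u+\fa_i,w)$ on $V(\Gamma)\times V(G_F)$ is a graph automorphism of $\Gamma\mathop\square G_F$, which is immediate from the two-case definition of the Cartesian product and the $\Z_\fa^d$-invariance of $\Gamma$. The extension $\widetilde Q(u,w)=Q(w)$ of the potential (trivial in the $\Gamma$-factor, consistent with $\nu_\Gamma=1$ which forces $Q_\Gamma$ constant) is also invariant under these translations. Since $V(\Gamma)=\Z_\fa^d+\{o\}$, a fundamental cell of $\Gamma\mathop\square G_F$ for this action is $V_f=\{o\}\times V(G_F)$, naturally identified with $V(G_F)$; this identifies the fundamental crystal.

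Next I would apply \eqref{e:htheta} to a function $f$ on $V_f$, evaluated at $v_n=(o,w)$. The neighbors of $(o,w)$ in $\Gamma\mathop\square G_F$ split as \emph{horizontal} neighbors $(u,w)$ with $u\sim_\Gamma o$, and \emph{vertical} neighbors $(o,w')$ with $w'\sim_{G_F}w$. Because $\nu_\Gamma=1$, every $u\sim_\Gamma o$ has $\{u\}_\fa=o$ and $\lfloor u\rfloor_\fa\in\Z_\fa^d$, so in the product $\lfloor(u,w)\rfloor_\fa=\lfloor u\rfloor_\fa$ and $\{(u,w)\}_\fa=(o,w)$, i.e.\ the fractional part stays in the same fiber $w$. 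For a vertical neighbor, the integer part vanishes and the fractional part is $(o,w')\in V_f$.

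Plugging these into \eqref{e:htheta} and separating the two sums, the horizontal contribution becomes
\[
\Big(\sum_{u\sim_\Gamma o}\ee^{\ii\theta_\fb\cdot\lfloor u\rfloor_\fa}\Big)f(o,w)=H_\Gamma(\theta_\fb)\,f(o,w),
\]
(a scalar, since $\nu_\Gamma=1$), and the vertical contribution, including the potential, becomes
\[
\sum_{w'\sim_{G_F}w}f(o,w')+Q(w)f(o,w)=\big(H_{G_F}\,f(o,\cdot)\big)(w).
\]
After the identification $V_f\leftrightarrow V(G_F)$, the right-hand side reads $\big(H_\Gamma(\theta_\fb)\otimes I+I\otimes H_{G_F}\big)f$, which is \eqref{e:cartop}.

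I do not expect any serious obstacle; the argument is essentially bookkeeping. The only substantive point is that, because $\nu_\Gamma=1$, each neighbor of $o$ in $\Gamma$ has its fractional part equal to $o$, so horizontal edges only produce phase factors and never couple different fibers of $V_f$. This fiber-diagonality is precisely what lets the Floquet matrix decompose as the sum $H_\Gamma(\theta_\fb)\otimes I+I\otimes H_{G_F}$ rather than a more general coupling; it would fail if we tried to replace $\Gamma$ by a periodic graph with larger $\nu$.
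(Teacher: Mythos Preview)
Your proof is correct and follows essentially the same route as the paper's: both verify periodicity of $\Gamma\mathop\square G_F$ with fundamental cell $\{o\}\times V(G_F)$, split the neighbors of $(o,w)$ into horizontal (bridge) and vertical (within-$G_F$) types, and apply \eqref{e:htheta} directly, noting that only the horizontal edges contribute $\theta$-dependent phases while the vertical ones give $H_{G_F}$. Your write-up is somewhat more explicit about tracking the integer and fractional parts (and about why $\nu_\Gamma=1$ is needed for the fiber-diagonality), but the underlying argument is the same.
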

\begin{proof}
Replace each $u\in \Gamma$ by a copy of $G_F$. The result has vertex set $V(\Gamma)\times V(G_F)$. According to rules (i)-(ii), we should have $\cA_{\Gamma \mathop\square G_F} = I\otimes \cA_{G_F} + \cA_\Gamma \otimes I$. This means that if we arrange the vertices of $\Gamma \mathop \square G_F$ as successive $G_F$-layers, then a given $(u,v)$ is connected on the one hand to the neighbors $(u,v')$ in the same layer (rule (i)) and to the neighbors $(u',v)$ outside (rule (ii)). This means that the edges are precisely the old edges of $G_F$ in each layer, as well as bridges between successive layers between the matching vertices. Recalling \eqref{e:htheta}, we see that the $\theta$-dependence only arises in the bridges from $(u,v)$ to another layer (the neighbors within $G_F$ have $\lfloor u\rfloor_{\fa}=0$). The bridges occur precisely at the bridges from $u$ to its neighbors in $\Gamma$. We conclude that $\cA_{\Gamma\mathop\square G_F}(\theta_\fb) = \cA_\Gamma(\theta_\fb)\otimes I + I\otimes \cA_{G_F}$. If we finally endow $G_F$ a potential and copy it across the layers, then $(Qf)(u,v)=Q_vf(u,v)$, so we obtain \eqref{e:cartop} (note that $\cA_{\Gamma}(\theta_\fb)=H_\Gamma(\theta_\fb)$ as $\nu=1$).
\end{proof}

\begin{proof}[Proof of Proposition~\ref{prp:cart}]
Since $\nu=1$ for $\Gamma$, $H_{\Gamma}(\theta_\fb)$ has just one eigenvalue $E_\Gamma(\theta_\fb)$. So the spectrum of $H_{\Gamma\mathop\square G_F}$ is the set $\{E_\Gamma(\theta_\fb) + \mu_j\}$, where $\mu_j$ are the eigenvalues of $H_{G_F}$ on the finite graph $G_F$.

Given nonzero $m$, we should thus control the quantity
\[
E_s\Big(\frac{r_\fb+m_\fb}{N}\Big) - E_w\Big(\frac{r_\fb}{N}\Big) = E_\Gamma\Big(\frac{r_\fb+m_\fb}{N}\Big) - E_\Gamma\Big(\frac{r_\fb}{N}\Big) +\mu_s - \mu_w \,.
\]

Here, $E_\Gamma(\theta_\fb) = 2\sum_{p=1}^D \cos(2\pi\theta\cdot n^{(p)})$ is precisely the quantity we controlled in \S\ref{sec:scalfib}. Following the arguments, we see that the same proof continues to hold here. In fact, $\tilde{g}_{m,r}(z)$ only has an additional term $\mu_s-\mu_w$, and the proof continues to hold, as no $\gamma_p$ is zero so this term cannot induce cancellations in the polynomial $z^{\gamma_*}\tilde{g}_{m,r}(z)$. Thus, the quantity in \eqref{e:eigenass} is $\le M N^{-1}\to 0$ as required, with the same $M\le 4d \ell_D^{d-1}q$ of the case $\nu=1$. This shows that the assumption of Theorem~\ref{thm:maingen} is satisfied for $\Gamma\mathop\square G_F$.

By \eqref{e:cartop}, the eigenvectors of $H_{\Gamma\mathop\square G_F}(\theta_\fb)$ are just the eigenvectors of $H_{G_F}$ (recall that $H_\Gamma(\theta_\fb)$ is just a scalar $1\times 1$ matrix). They are thus independent of $\theta_\fb$, and so are the eigenprojectors $P_s(\theta_\fb)$. This makes \eqref{e:opnabarpsi} a bit simpler here. If moreover we choose $\psi =\psi_u^{(N)}$ to consist of a tensor basis $\psi_u^{(N)} = \phi_n \otimes w_j$, where $(\phi_n)$ is an orthonormal eigenbasis of $H_{\Gamma}$ on $\Gamma$ and $(w_j)$ is an orthonormal eigenbasis of $H_{G_F}$, then the expression simplifies further. In fact, recalling \eqref{e:u}, we have $(U\psi)_r(v_q) = \frac{1}{N^{d/2}}\sum_k \ee^{-2\pi\ii r\cdot k/N}\phi_n(k_\fa)w_j(v_q) = \widehat{\phi}_n(r)w_j(v_q)$, where $\widehat{\phi}_n(r)$ is the Fourier coefficient of $\phi_n$ in the basis $e_k^{(N)}$ of $\ell^2(\LL_N^d)$. Hence, $(P_s(U\psi)_r)(v_q)  = \widehat{\phi}_n(r)(P_s w_j)(v_q)$. Thus, \eqref{e:opnabarpsi} simplifies to
\[
\sum_{q=1}^\nu \langle a(\cdot+v_q) \rangle\sum_r\sum_{s=1}^{\nu'} |\widehat{\phi}_n(r)|^2|(P_{E_s}w_j)(v_q)|^2 = \sum_{q=1}^\nu \langle a(\cdot+v_q)\rangle \sum_{s=1}^{\nu'}|(P_{E_s}w_j)(v_q)|^2\,,
\]
where we used that $\|\phi_n\|^2=1$. But $w_j$ is an eigenvector, so $P_{E_s}w_j = w_j$ if $E_s=E_j$ and $P_{E_s}w_j = 0$ otherwise. This completes the proof.
\end{proof}

\subsection{Graph decorations}\label{sec:deco}
Another way to create a new graph from given infinite and finite graphs $\Gamma$ and $G_F$ is to simply attach a copy of $G_F$ at each vertex of $\Gamma$. More precisely, we identify a special vertex $o_F\in G_F$ to each $v\in \Gamma$. This process is called \emph{graph decoration}. A very simple example is given in Figure~\ref{fig:deco}. The resulting graph is sometimes denoted by $\Gamma \mathop\triangleleft G_F$ (which reflects the procedure).

\begin{figure}[h!]
\begin{center}
\setlength{\unitlength}{1cm}
\thicklines
\begin{picture}(1.5,1.5)(-0.3,-0.3)
   \put(-5,0){\line(1,0){10}}
   \put(0,0){\line(1,1){1}}
   \put(0,0){\line(-1,1){1}}
   \put(3,0){\line(1,1){1}}
   \put(3,0){\line(-1,1){1}}
   \put(-3,0){\line(1,1){1}}
   \put(-3,0){\line(-1,1){1}}
	 \put(-1,1){\line(1,0){2}}
	 \put(2,1){\line(1,0){2}}
	 \put(-4,1){\line(1,0){2}}
	 \put(0,0){\circle*{.2}}
	 \put(3,0){\circle*{.2}}
	 \put(-3,0){\circle*{.2}}
	 \put(4,1){\circle*{.2}}
	 \put(2,1){\circle*{.2}}
	 \put(1,1){\circle*{.2}}
	 \put(-1,1){\circle*{.2}}
	 \put(-2,1){\circle*{.2}}
	 \put(-4,1){\circle*{.2}}
	 \put(-1.6,1){$-1$}
	 \put(1.2,1){$1$}
	 \put(-0.1,-0.5){$0$}
\end{picture}
\caption{Decorating $\Z$ with triangles. The values of an eigenfunction are shown (it is then extended by zero).}\label{fig:deco}
\end{center}
\end{figure} 

In contrast to Cartesian products, this process can be problematic for delocalization. For example, as shown in Figure~\ref{fig:deco}, this can create compactly supported eigenfunctions. The corresponding eigenvalue is a flat band, i.e. an infinitely degenerate eigenvalue. The example in Fig.~\ref{fig:deco} has the Floquet eigenvalues $\{-1,\frac{2\cos 2\pi\theta + 1 \pm \sqrt{4\cos^2 2\pi\theta - 4\cos 2\pi\theta + 9}}{2}\}$. This generates the spectrum of $H=\cA$ consisting of two bands which do not intersect. This spectrum is not very nice as the eigenvalue $-1$ is embedded in the left band, as can be seen by taking $\theta = \frac{1}{4}$.

It may be interesting to observe that in general, if $\Gamma$ is a periodic graph having $\nu=1$, then $\Gamma$, $\Gamma\mathop\square G_F$ and $\Gamma \mathop\triangleleft G_F$ are all ``loop graphs'' in the sense of Korotyaev and  Saburova \cite{KorSa}. This class of graphs was singled out in \cite{KorSa} for being more amenable to spectral analysis. We see that not all graphs in this class are quantum ergodic.

\begin{proof}[Proof of Proposition~\ref{prp:zno}]
For the graph in Fig.~\ref{fig:deco}, we have $|\Gamma_N|=3N$, and on $\Gamma_N$, we may construct $N$ localized eigenfunctions $f_j$, one on each triangle, each supported on only two vertices. Let $N$ be even and take the locally constant observable $a$ which is identically $1$ on triangles attached to even vertices, and identically zero on triangles attached to odd vertices. Then $\langle a\rangle =\frac{1}{2}$. On the other hand, if we normalize the eigenfunctions $f_j$ so that their values are $(\frac{1}{\sqrt{2}},\frac{-1}{\sqrt{2}},0,0,\dots,0)$, then $\langle f_{2j},a f_{2j}\rangle = \sum_v a(v)|f_{2j}(v)|^2 = 1$, while $\langle f_{2j+1},af_{2j+1}\rangle =0$ for each $j$. Hence,
\begin{align*}
\frac{1}{|\Gamma_N|}\sum_{u\in \Gamma_N} |\langle \psi_u^{(N)},a\psi_u^{(N)}\rangle -\langle a\rangle|^2 &\ge \frac{1}{3N}\sum_{j=1}^N |\langle f_j,af_j\rangle - \langle a\rangle|^2 \\
&= \frac{1}{3N}\Big[\frac{N}{2}\Big(\Big|1-\frac{1}{2}\Big|^2 + \Big|0-\frac{1}{2}\Big|^2\Big)\Big] = \frac{1}{12}\,.
\end{align*}

\end{proof}

\subsection{More product operations}\label{sec:tenstro}
Further operations to construct new graphs from old are the \emph{tensor product} and the \emph{strong product} of graphs.

\subsubsection{Strong products}\label{sec:stro} The strong product $G\boxtimes H$ has vertex set $V(G)\times V(H)$, with $(u,v)\sim (u',v')$ iff ($u=u'$ and $v\sim v'$) or ($u\sim u'$ and $v=v'$) or ($u\sim u'$ and $v\sim v'$). We thus add more edges to the Cartesian product. 

This operation is not as well behaved as the Cartesian one. For example, consider $\Z\boxtimes P_2$, where $P_2$ is a $2$-path. The result (Figure~\ref{fig:box})
\begin{figure}[h!]
\begin{center}
\setlength{\unitlength}{1cm}
\thicklines
\begin{picture}(1.3,1.3)(-1.1,-1.1)
   \put(-5,0){\line(1,0){10}}
	 \put(-5,-1){\line(1,0){10}}
	 \put(-5,-1){\line(2,1){2}}
	 \put(-5,0){\line(2,-1){2}}
	 \put(-3,-1){\line(0,1){1}}
	 \put(-3,-1){\line(2,1){2}}
	 \put(-3,0){\line(2,-1){2}}
	 \put(-1,-1){\line(0,1){1}}
	 \put(-1,-1){\line(2,1){2}}
	 \put(-1,0){\line(2,-1){2}}
	 \put(1,-1){\line(0,1){1}}
		\put(1,-1){\line(2,1){2}}
	 \put(1,0){\line(2,-1){2}}
	 \put(3,-1){\line(0,1){1}}
		\put(3,-1){\line(2,1){2}}
	 \put(3,0){\line(2,-1){2}}
	 \put(-1,-1){\circle*{.2}}
	 \put(-1,0){\circle*{.2}}
	 \put(-3,-1){\circle*{.2}}
	 \put(-3,0){\circle*{.2}}
	 \put(1,-1){\circle*{.2}}
	 \put(1,0){\circle*{.2}}
	 \put(3,-1){\circle*{.2}}
	 \put(3,0){\circle*{.2}}
	 \put(0.9,0.2){$1$}
	 \put(0.8,-1.5){$-1$}
\end{picture}
\caption{$\Z\boxtimes P_2$. An eigenfunction localized on two vertices is shown.}\label{fig:box}
\end{center}
\end{figure}
is an infinite sequence of boxes~$\boxtimes$. Unlike the ladder, this graph has some point spectrum. In fact, the Floquet matrix here is $H(\theta_\fb) = \begin{pmatrix} 2\cos 2\pi\theta & 1+2\cos 2\pi\theta\\ 2\cos 2\pi\theta +1 & 2\cos 2\pi\theta \end{pmatrix}$, with eigenvalues $\{-1,1+4\cos 2\pi\theta\}$. Quantum ergodicity is violated (use the eigenfunction shown in Figure~\ref{fig:box} and argue as in \S\ref{sec:deco}).

See \S\ref{sec:boxes} for a further analysis when we add a potential.

Still, this product sometimes behaves well. For example, $\Z\boxtimes \Z$ gives the king's graph, which is quantum ergodic since it is periodic with $\nu=1$.

\subsubsection{Tensor products}
Next, the tensor product $G\times H$ has vertex set $V(G)\times V(H)$, with $(u,v)\sim (u',v')$ iff ($u\sim u'$ and $v\sim v'$). Equivalently, $\cA_{G\times H} = \cA_G\otimes \cA_H$. The edges of this product are precisely the ones we added to the Cartesian product when discussing strong products.

The product of two connected graphs is not necessarily connected. For example, the tensor product of two path graphs of length $2$ $\{a,b\}$ and $\{v,w\}$ gives the union of the two paths $\{(a,v),(b,w)\}$ and $\{(a,w),(b,v)\}$. To consider a product graph of the form $\Gamma \times G_F$ for quantum ergodicity, where $\Gamma$ is a quantum ergodic graph and $G_F$ is some finite graph, we first need $\Gamma \times G_F$ to be connected. It turns out this is satisfied if and only if either $\Gamma$ or $G_F$ contains an odd cycle, see \cite{Weic}.

Assume now that we are given a periodic $\Gamma$ with $\nu=1$, for simplicity. Just like Cartesian products, the tensor structure of the adjacency matrix translates well into the Floquet fibers. To see this, it is best to first picture the product operation. Geometrically, we simply consider the $G_F$-layers structure of Cartesian products, but then we remove all edges and add instead the following ones~: a given $(u,v)$ in a $G_F$ layer is connected to all vertices $(u',v')$, where $u'$ is in a different $G_F$ layer and $v'\sim v$ in $G_F$. Note that $V_f=G_F$ contains no edges. Instead, if we ``project'' the edges going from a neighboring $G_F$ layer to the starting one, we obtain the finite graph $G_F$ that we started with. We may also endow $G_F$ with some potential $Q$ which is copied across the layers.

By definition \eqref{e:htheta}, we have $H(\theta_\fb)f(u,v) = \sum_{u'\sim u,v'\sim v} \ee^{\ii\theta_\fb\cdot \lfloor u'\rfloor_\fa} f(u,v') + Q_v f(u,v) = H_\Gamma(\theta_\fb) \otimes H_{G_F} f(u,v)$, where we used here that $\{(u',v')\}_\fa = (u,v')$ and $\lfloor (u',v')\rfloor_\fa = \lfloor u'\rfloor_\fa$ by construction. This shows that $H_{\Gamma\times G_F}(\theta_\fb) = H_\Gamma(\theta_\fb) \otimes H_{G_F}$. Consequently,
\begin{equation}\label{e:tensorspec}
\sigma(H_{\Gamma \times G_F}(\theta_\fb)) = \{\mu_j E_\Gamma(\theta_\fb)\}_{j=1}^\nu,
\end{equation}
where $\mu_j$ are the eigenvalues of $H_{G_F}$. Note that if $\mu_j=0$ for some $j$, then this creates a flat band $\{0\}$ for $H_{\Gamma\times G_F}$, i.e. an infinitely degenerate eigenvalue.

We now consider the special case of $\Z\times G_F$. So $E_\Gamma(\theta_\fb) = 2\cos 2\pi\theta$.

\begin{proof}[Proof of Proposition~\ref{prp:counterass}] 
To construct a counterexample, we take $G_F$ such that
\begin{enumerate}[(i)]
\item $G_F$ is not bipartite,
\item $0\notin \sigma(\cA_{G_F})$,
\item there exists $s$ such that $\mu_s$ and $-\mu_s$ belong to $\sigma(\cA_{G_F})$. 
\end{enumerate}

Point (i) is necessary to make $\Z\times G_F$ connected, (ii) is necessary to avoid a point spectrum $\{0\}$, and (iii) is what will contradict \eqref{e:eigenass}.

We take $G_F$ as the butterfly graph

\begin{figure}[h!]
\begin{center}
\setlength{\unitlength}{0.6cm}
\thicklines
\begin{picture}(6,6)(-3.8,-3.8)
	 \put(2,2){\circle*{.2}}
	 \put(3,3){\circle*{.2}}
	 \put(6,2){\circle*{.2}}
	 \put(5,3){\circle*{.2}}
	 \put(4,2.5){\circle*{.2}}
	 \put(2,-1){\textcolor{red}{\circle*{.2}}}
	 \put(3,0){\textcolor{red}{\circle*{.2}}}
	 \put(6,-1){\textcolor{red}{\circle*{.2}}}
	 \put(5,0){\textcolor{red}{\circle*{.2}}}
	 \put(4,-0.5){\textcolor{red}{\circle*{.2}}}
	 \put(2,-4){\circle*{.2}}
	 \put(3,-3){\circle*{.2}}
	 \put(6,-4){\circle*{.2}}
	 \put(5,-3){\circle*{.2}}
	 \put(4,-3.5){\circle*{.2}}
	 \put(2,-4){\line(1,4){1}}
	 \put(1.95,-4){\line(3,5){2.1}}
	 \put(6,-4){\line(-1,4){1}}
	 \put(6.05,-4){\line(-3,5){2.1}}
	 \put(3,-3){\line(-1,2){1}}
	 \put(3,-3){\line(2,5){1}}
	 \put(5,-3){\line(1,2){1}}
	 \put(5,-3){\line(-2,5){1}}
	 \put(3.95,-3.5){\line(-1,4){0.9}}
	 \put(3.95,-3.5){\line(-3,4){1.9}}
	 \put(4.05,-3.5){\line(3,4){1.9}}
	 \put(4.05,-3.5){\line(1,4){0.9}}
	 	 \put(2,-1){\line(1,4){1}}
	 \put(1.95,-1){\line(3,5){2.1}}
	 \put(6,-1){\line(-1,4){1}}
	 \put(6.05,-1){\line(-3,5){2.1}}
	 \put(3,0){\line(-1,2){1}}
	 \put(3,0){\line(2,5){1}}
	 \put(5,0){\line(1,2){1}}
	 \put(5,0){\line(-2,5){1}}
	 \put(3.95,-0.5){\line(-1,4){0.9}}
	 \put(3.95,-0.5){\line(-3,4){1.9}}
	 \put(4.05,-0.5){\line(3,4){1.9}}
	 \put(4.05,-0.5){\line(1,4){0.9}}
		 \put(-6,-1){\line(0,1){1}}
	 \put(-6,-1){\line(1,0.5){2}}
	 \put(-6,0){\line(1,-0.5){2}}
	 \put(-4,-1){\line(0,1){1}}
	 \put(-6,-1){\circle*{.2}}
	 \put(-6,0){\circle*{.2}}
	 \put(-4,-1){\circle*{.2}}
	 \put(-4,0){\circle*{.2}}
	 \put(-5,-0.5){\circle*{.2}}
	 \put(-6.8,0){\small{$v_1$}}
	 \put(-6.8,-1){\small{$v_2$}}
	 \put(-5.2,-0.2){\small{$v_3$}}
	 \put(-3.8,0){\small{$v_4$}}
	 \put(-3.8,-1){\small{$v_5$}}
\end{picture}
\caption{The butterfly graph $G_F$ (left) and part of the tensor product $\Z\times G_F$ (right). A fundamental set is colored in red}
\end{center}
\end{figure}

Since $\cA_{G_F}$ is a $5\times 5$ matrix, we can compute its eigenvalues and eigenvectors explicitly and find the following:
\[
\mu_1 = \frac{1+\sqrt{17}}{2}\,, \qquad \mu_2 = \frac{1-\sqrt{17}}{2}\,, \qquad \mu_3=-1\,, \qquad \mu_4=-1\,, \qquad \mu_5=1
\]
\[
w_1 = c_1\Big(1,1,\frac{-1+\sqrt{17}}{2},1,1\Big)\,, \qquad w_2 = c_2\Big(1,1,\frac{-1-\sqrt{17}}{2},1,1\Big) \,,
\]
\[
w_3=\frac{1}{\sqrt{2}}(0,0,0,-1,1)\,, \qquad w_4=\frac{1}{\sqrt{2}}(-1,1,0,0,0)\,, \qquad w_5 = \frac{1}{2}(-1,-1,0,1,1)
\]
for normalization constants $c_1,c_2$. We actually only need $w_4,w_5$ for the following argument, it is immediate to check that they are eigenvectors for $\mu_4,\mu_5$, respectively.

We see properties (i)--(iii) are satisfied, take e.g. $\mu_s = 1$.

By \eqref{e:tensorspec}, $\sigma(\cA_{\Z\times G_F}(\theta_\fb))$ is just $\{2\mu_j \cos 2\pi\theta\}$, where $\mu_j$ runs over the above list of eigenvalues. It follows that $\sigma(\cA_{\Z\times G_F})$ is purely absolutely continuous (as each Floquet eigenvalue is analytic and nonconstant, see \cite[Th. XIII.86]{RS4}). The graph $\Z\times G_F$ is also connected, since $[\![-n,n]\!]\times G_F$ is connected for any $n$ by \cite{Weic}.

If $\mu_s = 1$ and $\mu_w = -1$, we find that
\[
E_s(\theta_\fb+\alpha_\fb) - E_w(\theta_\fb) = \mu_s(2\cos(2\pi(\theta+\alpha)) + 2\cos 2\pi\theta) = 4\mu_s \cos\pi(2\theta+\alpha)\cos\pi\alpha \,.
\]
This is zero if $\alpha=\frac{1}{2}$, for all $\theta$. This suffices to contradict \eqref{e:eigenass}. In fact, taking $m=\frac{N}{2}\in \LL_N$ assuming $N$ is even, the fraction in \eqref{e:eigenass} is equal to $1$ and does not vanish. 

We now show the tensor product $\Z\times G_F$ is not quantum ergodic. The hint for the choice of the observable comes from Remark~\ref{rem:nec}. Namely, consider $a(k+v_q) = \ee^{2\pi \ii mk/N}\delta_{v_1}(v_q)$. Then $\langle a(\cdot+v_q)\rangle = 0$ for all $v_q$, so $\langle \psi,\opn(\overline{a})\psi\rangle=0$ by \eqref{e:opnabarpsi}. We choose the problematic value of $m$, namely $m=\frac{N}{2}$, so we take $a(k+v_q) := \ee^{\pi\ii k}\delta_{v_1}(v_q)$.

Now, choose $\phi_n(k) = \frac{1}{\sqrt{N}}\ee^{2\pi\ii nk/N}$ as an eigenbasis for $\cA_{P_N}$ with periodic conditions and consider the orthonormal sequence
\[
g_n= \frac{\phi_{n}\otimes w_4+\phi_{n+\frac{N}{2}}\otimes w_5}{\sqrt{2}}
\]
in $\Gamma_N=P_N\otimes G_F$, for $n=0,\dots, \frac{N}{2}-1$, with eigenvalue $-\lambda_n=-2\cos\frac{2\pi n}{N}$.

Since $\langle \psi,\opn(\overline{a})\psi\rangle=0$, it suffices to show that $\frac{1}{|\Gamma_N|}\sum_{u\in \Gamma_N} |\langle \psi_u,a\psi_u\rangle|^2$ does not converge to zero. We have
\begin{multline*}
\langle g_n,ag_n\rangle = \sum_{k=0}^{N-1}\sum_{q=1}^5 a(k+v_q)|g_n(k+v_q)|^2 = \frac{1}{2}\sum_{k=0}^{N-1}\ee^{\pi\ii k}|\phi_{n}(k)w_4(v_1)+\phi_{n+\frac{N}{2}}(k)w_5(v_1)|^2\\
= \frac{1}{2N}\sum_{k=0}^{N-1}\ee^{\pi\ii k}\Big|\frac{\ee^{2\pi\ii nk/N}}{\sqrt{2}}+ \frac{\ee^{2\pi\ii(n+\frac{N}{2})k/N}}{2}\Big|^2 = \frac{1}{4N}\sum_{k=0}^{N-1}\ee^{\pi\ii k} \Big|1+\frac{\ee^{\pi\ii k}}{\sqrt{2}}\Big|^2\\
=\frac{1}{4N}\sum_{k=0}^{N-1}\ee^{\pi\ii k}\Big(\frac{3}{2}+\frac{\ee^{\pi\ii k}+\ee^{-\pi\ii k}}{\sqrt{2}}\Big) =  \frac{1}{4\sqrt{2}}.
\end{multline*}
Thus, by completing the orthonormal family $(g_n)$ to an o.n.b. $(\psi_u)$, we get
\[
\frac{1}{|\Gamma_N|}\sum_{u\in\Gamma_N} |\langle \psi_u,a\psi_u\rangle|^2 \ge \frac{1}{5N}\sum_{n=0}^{ \frac{N}{2} -1} |\langle g_n,ag_n\rangle|^2 = \frac{N/2}{5N}\cdot\frac{1}{32}= \frac{1}{320}.
\]
This completes the proof.
\end{proof}

Note that this counterexample violates \eqref{e:eigenass} for some $w\neq s$. For $w=s$, \eqref{e:tensorspec} tells us that in general $E_s(\theta_\fb+\alpha_\fb)-E_s(\theta_\fb) = \mu_s(E_\Gamma(\theta_\fb+\alpha_\fb)-E_\Gamma(\theta_\fb))$, which is controlled in \S\ref{sec:scalfib}. Hence, Theorem~\ref{thm:maingen}(ii) implies that if $\Gamma$ is any periodic graph with $\nu=1$, and if $G_F$ is a finite nonbipartite graph such that $0\notin\sigma(\cA_{G_F})$, then quantum ergodicity is satisfied for the tensor product $\Gamma\times G_F$ \emph{for the class of locally constant observables.}

\section{Concrete examples} \label{sec:examp}

\subsection{Graphs with scalar fibers}\label{sec:scalexa}
For the adjacency matrix $H = \cA$ on $\Z^d$ or the triangular lattice (sometimes called hexagonal, see \cite[Fig. 3]{KorSa}) where each vertex has 6 neighbors, or the king's graph (sometimes called EHM lattice), we have $\nu=1$ so Theorem~\ref{thm:mainnu1} applies.

The family of periodic graphs having $\nu=1$ is quite rich. For example, one can consider $\Z$ and add edges up to some fixed distance $k$ from each vertex. More precisely,
\[
(\cA f)(n) = f(n-k)+f(n-k+1)+\dots+f(n+k-1)+f(n+k) \,.
\]
Then $V_f = \{0\}$, $\fa_1 = \mathfrak{e}_1$ and $H(\theta_\fb) = 2\cos 2\pi\theta + 2\cos 4\pi\theta+\dots+ 2\cos 2\pi k\theta$. Similar variations can be performed on $\Z^d$.

We remark that the connectedness of $\Gamma$ is important. For example, if we consider $\Z$ with $(\cA f)(n) = f(n-2)+f(n+2)$, then the graph is disconnected (there are two copies of $\Z$, for the even and odd vertices, respectively). Here, $V_f = \{0\}$, $\fa_1=\mathfrak{e}_1$ and $H(\theta_\fb) = 2\cos 4\pi\theta$, which does not obey \eqref{e:eigenass}, since for $\alpha=\frac{1}{2}$, we have $E(\theta_\fb + \alpha_\fb) = E(\theta_b)$ for all $\theta$.

\subsection{Honeycomb lattice}\label{sec:honey} 
Consider the honeycomb lattice (\cite[Fig. 7]{KorSa}, a.k.a graphene or hexagonal lattice) where each vertex has 3 neighbors. Here $\nu=2$, $H(\theta_{\fb}) = \begin{pmatrix} 0& \xi(\theta_{\fb})\\ \overline{\xi(\theta_{\fb})} &0 \end{pmatrix}$, where $\xi(\theta_{\fb}) = 1+\ee^{-\ii\theta_{\fb}\cdot\fa_1}+\ee^{-\ii\theta_{\fb}\cdot\fa_2}$ for the crystal basis $\fa_1=a(1,0)$, $\fa_2=\frac{a}{2}(1,\sqrt{3})$, $a>0$. This gives the eigenvalues $\pm |\xi(\theta_{\fb})| = \pm\sqrt{3+2\cos 2\pi\theta_1+2\cos 2\pi\theta_2 + 2\cos 2\pi(\theta_1-\theta_2)}$. Assumption \eqref{e:eigenass} is clearly satisfied if $w\neq s$ as the bands only meet at $0$ (for $(\theta_1,\theta_2)=(\frac{2}{3},\frac{1}{3})$). On the other hand, we can control the event that $|\xi(\theta_\fb+\alpha_\fb)| = |\xi(\theta_\fb)|$ by squaring, deducing as a special consequence of the arguments in \S\ref{sec:scalfib} that \eqref{e:eigenass} is satisfied. This shows that Theorem~\ref{thm:maingen} holds true. Let us investigate \eqref{e:opnabarpsi}.

The eigenvectors $w_{\pm}(\theta_{\fb}) = \frac{1}{\sqrt{2}}(1,\pm \ee^{-\ii\phi(\theta_{\fb})})^\intercal$, where $\phi(\theta_{\fb})$ is the argument of $\xi(\theta_{\fb})$. So $P_{\pm}(\theta_{\fb})f(v_1) = \frac{f(v_1)\pm \ee^{\ii\phi(\theta_{\fb})} f(v_2)}{2}$, $P_{\pm}(\theta_{\fb})f(v_2) = \frac{f(v_1)\pm \ee^{\ii\phi(\theta_{\fb})} f(v_2)}{2}(\pm \ee^{-\ii\phi(\theta)_{\fb}})$. It follows that $|P_+f(v_1)|^2 + |P_-f(v_1)|^2 = \frac{|f(v_1) +\ee^{\ii\phi(\theta_{\fb})}f(v_2)|^2+|f(v_1)-\ee^{\ii\phi(\theta_{\fb})}f(v_2)|^2}{4} = \frac{|f(v_1)|^2+|f(v_2)|^2}{2} = \frac{\|f\|^2}{2} = |P_+f(v_2)|^2+|P_-f(v_2)|^2$. 

We showed that for the honeycomb lattice, \eqref{e:opnabarpsi} reduces to
\[
\sum_{q=1}^2\sum_{r\in\LL_N^d} \frac{\|(U\psi)_r\|_{\C^\nu}^2}{2}\langle a(\cdot+v_q)\rangle = \frac{\langle a(\cdot+v_1)\rangle + \langle a(\cdot+v_2)\rangle}{2}\|\psi\|^2
\]
which is the uniform average.

\subsection{Ladder graph} \label{sec:lad}
Consider the ladder graph $\Z\mathop\square P_2$ in Figure~\ref{fig:lad}. As a Cartesian product, we already know that Proposition~\ref{prp:cart} holds true, but we show here that we always get the uniform average in this example.

We have $H(\theta_\fb)f(v_1) = \ee^{2\pi\ii\theta}f(v_1) + \ee^{-2\pi\ii \theta}f(v_1) + f(v_2)$ and $H(\theta_\fb)f(v_2) = \ee^{2\pi\ii\theta}f(v_2) + \ee^{-2\pi\ii\theta}f(v_2) + f(v_1)$. Thus, $H(\theta_\fb) = \begin{pmatrix} 2\cos 2\pi\theta & 1\\ 1& 2\cos 2\pi\theta \end{pmatrix}$. The eigenvalues are $E_{\pm}(\theta_\fb) = 2\cos 2\pi\theta \pm 1$. Clearly $(1,1)$ and $(1,-1)$ are eigenvectors. So the eigenprojectors are $P_{\pm}(\theta_\fb) f = \langle w_{\pm}, f\rangle w_{\pm}$, with $w_{\pm} = \frac{1}{\sqrt{2}}(1,\pm 1)$, independently of $\theta$. Thus, $P_{\pm}f(v_1) = \frac{f(v_1) \pm f(v_2)}{2}$ and $P_{\pm}f(v_2) = - \frac{f(v_1)\pm f(v_2)}{2}$. As in the honeycomb lattice, we deduce that \eqref{e:opnabarpsi} reduces to $\frac{\langle a(\cdot+v_1)\rangle + \langle a(\cdot+v_2)\rangle}{2}\|\psi\|^2$.

If we endow $P_2$ with a potential $Q_\bullet,Q_\circ$, then we get a ladder with a potential coming in two parallel sheets, the upper sheet only containing $Q_\bullet$, the lower only $Q_\circ$. The construction can be generalized to $\Z\mathop\square P_k$ to create an infinite $k$-strip. Proposition~\ref{prp:cart} continues to apply, but the average may be non-uniform.

\subsection{Periodic potentials on the integer lattice}\label{sec:1d}
Consider $\Z$ endowed with a periodic potential taking $\nu$ values $Q_i$. We have $V_f = \{1,\dots,\nu\}$, $\fa_1 = \nu \mathfrak{e}_1$ and $\fb_1 = \frac{2\pi}{\nu}\mathfrak{e}_1$. 

Now $H(\theta_\fb)f(1) = Q_1 f_1 + f_2+ \ee^{-2\pi\ii\theta}f(\nu)$, $H(\theta_\fb)f(i) = Q_i f_i + f_{i-1} + f_{i+1}$ for $1<i<\nu$ and $H(\theta_\fb)f(\nu) = Q_\nu f_\nu + f_{\nu-1} + \ee^{2\pi\ii\theta} f_1$. We thus have
\[
H(\theta_\fb) = \begin{pmatrix} Q_1 & 1 & 0 & \cdots & \ee^{-2\pi\ii\theta}\\ 1 & Q_2 & 1 & & 0\\ & & \ddots\\ & & & &1 \\ \ee^{2\pi\ii\theta} & 0& & 1& Q_\nu \end{pmatrix} \,.
\]

Let $z = \ee^{2\pi\ii\theta}$. Expanding the determinant of the characteristic polynomial $p(z;\lambda)$ in detail, we see that \cite[Lemma 3.1]{GKT}
\begin{equation}\label{e:charpol1d}
p(z;\lambda) = \Delta(\lambda)-z-z^{-1}
\end{equation}
for some polynomial $\Delta(\lambda,Q)$. This splitting into pure $\lambda$ and $z$ parts is specific to one dimension. 

Now fix $\alpha\neq 0$, let $\zeta = \ee^{2\pi\ii\alpha}$ and suppose that $E_s(\theta_\fb+\alpha_\fb) = E_w(\theta_\fb)$ for some $s,w$. Then $\lambda = E_s(\theta_\fb+\alpha_\fb)$ solves \eqref{e:charpol1d}. On the other hand, $\lambda=E_s(\theta_\fb+\alpha_\fb)$ is also a root of the characteristic polynomial of $H(\theta_\fb+\alpha_\fb)$, which is
\[
p(z\zeta;\lambda) = \Delta(\lambda) -z\zeta - (z\zeta)^{-1}\,.
\]

For this choice of $\lambda$ we thus have $p(\lambda;z) = p(\lambda;z\zeta)=0$. So $z+z^{-1}=z\zeta+(z\zeta)^{-1}$. This yields a quadratic expression for $z$. Hence, for any fixed $\alpha\neq 0$, there are at most two $\theta$ such that $E_s(\theta_\fb+\alpha_\fb) = E_w(\theta_\fb)$. This implies \eqref{e:eigenass}.

\smallskip

The case of $\cA+Q$ on $\Z^d$, $d>1$, with $Q(n+p_j\mathfrak{e}_j)=Q(n)$, is more delicate. The criterion has been established in \cite{Wen2} using the point of view of Bloch varieties; see \S~\ref{sec:irred} for some background. Here we simply mention that for this model, it is equivalent to study
\[
\widetilde{H}(\theta) = D_\theta +B_Q\,,
\]
on $\ell^2(V_f)$, where $D_\theta$ is a diagonal operator and $B_Q$ is a convolution given by
\[
(D_\theta f)(u) = \bigg(\sum_{j=1}^d 2\cos2\pi \Big(\frac{u_j+\theta_j}{p_j}\Big)\bigg)f(u)\,, \qquad (B_Qf)(u) = \sum_{v_q\in V_f}\widehat{Q}\Big(\frac{u-v_q}{p}\Big)f(v_q)\,,
\]
with $\widehat{Q}(\sigma)=\frac{1}{\nu}\sum_{v_n\in V_f} Q(v_n)\ee^{-2\pi\ii\sigma\cdot v_n}$ and $\frac{u}{p} := (\frac{u_1}{p_1},\dots,\frac{u_d}{p_d})$. 

Note that $V_f = \times_{j=1}^d[\![0,p_j-1]\!]$, so that $\nu = \prod_{j=1}^d p_j$. It is not difficult to show that our operator $H(\theta_\fb)$ is unitarily equivalent to $\widetilde{H}(\theta)$, with
\[
H(\theta_\fb) = \mathcal{F}_\theta^{-1}\widetilde{H}(\theta)\mathcal{F}_\theta\,,
\]
where $\mathcal{F}_\theta:\ell^2(V_f)\to\ell^2(V_f)$ is given by
\[
(\mathcal{F}_\theta f)(u) = \frac{1}{\sqrt{\nu}}\sum_{v_q\in V_f}\ee^{-2\pi\ii(\frac{u+\theta}{p})\cdot v_q}f(v_q)\,.
\]
This equivalence is used in the proof of \cite{Wen2}.

\smallskip

Back to $d=1$, let us examine \eqref{e:opnabarpsi} for $\Z$ with a $2$-periodic potential $Q_\bullet,Q_\circ$. Here $H(\theta_{\fb}) = \begin{pmatrix} Q_\bullet & 1+\ee^{-2\pi\ii\theta}\\ 1+\ee^{2\pi\ii\theta}& Q_\circ\end{pmatrix}$. The eigenvalues solve $(Q_\bullet-\lambda)(Q_\circ-\lambda)-(2+2\cos 2\pi\theta)=0$, so $E_\pm(\theta_{\fb}) = \frac{Q_\bullet+Q_\circ \pm c}{2}$, with $w_\pm = (\frac{Q_\bullet-Q_\circ\pm c}{2(1+\ee^{2\pi\ii\theta})},1)$ and $c = \sqrt{(Q_\bullet-Q_\circ)^2+16\cos^2  \pi\theta}$. 

After some tedious calculations, we conclude that \eqref{e:opnabarpsi} takes the form
\begin{multline}\label{e:av2per}
\langle \psi,\opn(\overline{a})\psi\rangle = \sum_{q=1}^2\langle a(\cdot+v_q)\rangle \sum_{r=0}^{N-1}\Big[|P_+\Big(\frac{r_{\fb}}{N}\Big)(U\psi)_r(v_q)|^2+|P_-\Big(\frac{r_{\fb}}{N}\Big)(U\psi)_r(v_q)|^2\Big]\\
= \langle a(\cdot)\rangle \sum_{r=0}^{N-1} \Big[\frac{8\cos^2 \frac{\pi r}{N}+(Q_\circ-Q_\bullet)^2}{16\cos^2 \frac{\pi r}{N}+(Q_\circ-Q_\bullet)^2} |(U\psi)_r(0)|^2 + \frac{8\cos^2\frac{\pi r}{N}}{16\cos^2 \frac{\pi r}{N}+(Q_{\circ}-Q_\bullet)^2} |(U\psi)_r(1)|^2\\
 - \frac{2(Q_\circ-Q_\bullet)}{16\cos^2 \frac{\pi r}{N}+(Q_\circ-Q_\bullet)^2} \Re (1+\ee^{-\frac{2\pi\ii r}{N}})(U\psi)_r(0)\overline{(U\psi)_r(1)}\Big]\\
+\langle a(\cdot+1)\rangle\sum_{r=0}^{N-1}\Big[\frac{8\cos^2 \frac{\pi r}{N}}{16\cos^2 \frac{\pi r}{N}+(Q_{\circ}-Q_\bullet)^2} |(U\psi)_r(0)|^2 + \frac{8\cos^2 \frac{\pi r}{N} + (Q_\circ-Q_\bullet)^2}{16\cos^2 \frac{\pi r}{N}+(Q_\circ-Q_\bullet)^2}|(U\psi)_r(1)|^2\\
 + \frac{2(Q_\circ-Q_\bullet)}{16\cos^2 \frac{\pi r}{N}+(Q_\circ-Q_\bullet)^2} \Re (1+\ee^{-\frac{2\pi\ii r}{N}})(U\psi)_r(0)\overline{(U\psi)_r(1)}\Big]\,.
\end{multline}

Note that if $\langle a(\cdot)\rangle= \langle a(\cdot+1)\rangle$, this indeed reduces to $\langle a(\cdot)\rangle \|\psi\|^2$. 

Let us study the expression in the limit $|Q_\circ-Q_\bullet|\to\infty$. We obtain
\[
\lim_{|Q_\circ-Q_\bullet|\to\infty}\langle \psi,\opn(\overline{a})\psi\rangle = \langle a(\cdot)\rangle\sum_{r=0}^{N-1} |(U\psi)_r(0)|^2 + \langle a(\cdot+1)\rangle \sum_{r=0}^{N-1}|(U\psi)_r(1)|^2\,.
\]
Here, $(U\psi)_r(0) = \frac{1}{\sqrt{N}} \sum_{k=0}^{N-1}\ee^{\frac{-2\pi\ii rk}{N}} \psi(2k)$ and $(U\psi)_r(1) = \frac{1}{\sqrt{N}}\sum_{k=0}^{N-1}\ee^{\frac{-2\pi\ii rk}{N}} \psi(2k+1)$. It follows that
\[
\lim_{|Q_\circ-Q_\bullet|\to\infty} \langle \psi,\opn(\overline{a})\psi\rangle = \langle a(\cdot)\rangle \sum_{k=0}^{N-1} |\psi(2k)|^2 + \langle a(\cdot+1)\rangle \sum_{k=0}^{N-1}|\psi(2k+1)|^2\,.
\]

\subsection{Cylinders}\label{sec:cyl}
Consider the Cartesian product $\Gamma = \Z \mathop\square C_4$, where $C_4$ is the $4$-cycle. 

\begin{figure}[h!]
\begin{center}
\setlength{\unitlength}{1cm}
\thicklines
\begin{picture}(1.8,1.8)(-0.8,-0.8)
   \put(-5,0){\line(1,0){10}}
	 \put(-5,-1){\line(1,0){10}}
	 \put(-3,-1){\line(0,1){1}}
	 \put(-1,-1){\line(0,1){1}}
	 \put(1,-1){\line(0,1){1}}
	 \put(3,-1){\line(0,1){1}}
	 \put(-1,-1){\circle*{.2}}
	 \put(-1,0){\circle*{.2}}
	 \put(-3,-1){\circle*{.2}}
	 \put(-3,0){\circle*{.2}}
	 \put(1,-1){\circle*{.2}}
	 \put(1,0){\circle*{.2}}
	 \put(3,-1){\circle*{.2}}
	 \put(3,0){\circle*{.2}}
	 \put(-3,0){\line(1,1){1}}
	 \put(-1,0){\line(1,1){1}}
	 \put(1,0){\line(1,1){1}}
	 \put(3,0){\line(1,1){1}}
	 \put(-4,1){\line(1,0){10}}
	 \put(-2,1){\circle*{.2}}
	 \put(0,1){\circle*{.2}}
	 \put(2,1){\circle*{.2}}
	 \put(4,1){\circle*{.2}}
	 \multiput(4,1)(0,-0.2){5}{\line(0,-1){0.1}}
 	 \multiput(-2,1)(0,-0.2){5}{\line(0,-1){0.1}}
  \multiput(0,1)(0,-0.2){5}{\line(0,-1){0.1}}
	 \multiput(2,1)(0,-0.2){5}{\line(0,-1){0.1}}
	 \multiput(4,1)(0,-0.2){5}{\line(0,-1){0.1}}
	 \multiput(-3,-1)(0.2,0.2){5}{\line(1,1){0.1}}
	\multiput(-1,-1)(0.2,0.2){5}{\line(1,1){0.1}}
	\multiput(1,-1)(0.2,0.2){5}{\line(1,1){0.1}}
	\multiput(3,-1)(0.2,0.2){5}{\line(1,1){0.1}}
	\put(-2,0){\circle{.2}}
	\put(0,0){\circle{.2}}
	\put(2,0){\circle{.2}}
	\put(4,0){\circle{.2}}
\end{picture}
\caption{The cylinder, $\Z\mathop\square C_4$.}\label{fig:cyl}
\end{center}
\end{figure}

Given any o.n.b. $(\phi_n)$ for $\cA$ on the $N$-path, consider the bases
\[
w_1 = \frac{1}{2}(1,1,-1,-1)\,, \quad w_2 = \frac{1}{\sqrt{2}}(0,1,0,-1)\,, \quad w_3 = \frac{1}{\sqrt{2}}(1,0,-1,0)\,, \quad w_4=(1,1,1,1)
\]
and
\[
\kappa_j = \frac{1}{2}(1,\omega^j,\omega^{2j},\omega^{3j})
\]
for $\cA_{C_4}$, where $\omega=\ee^{\pi \ii/2}$ and $j=0,\dots,3$. By Proposition~\ref{prp:cart}, we know that the orthonormal eigenbases of $\Gamma$ approach some weighted averages.

If we choose the eigenbasis $\psi_{n,j} = \phi_n\otimes w_j$, then by \eqref{e:cartav},
\[
\langle \psi_{n,j},\opn(\overline{a})\psi_{n,j}\rangle =\begin{cases} \frac{1}{4}\sum_{q=1}^4 \langle a(\cdot+v_q)\rangle & \text{if } j=1,4\\ \frac{\langle a(\cdot+v_2)\rangle + \langle a(\cdot+v_4)\rangle}{2} & \text{if }j=2,\\ \frac{\langle a(\cdot +v_1) \rangle+\langle a(\cdot +v_3)\rangle}{2} & \text{if } j=3. \end{cases}
\]
On the other hand, if $\widetilde{\psi}_{n,j} = \phi_n\otimes \kappa_j$, then for $j=1,\dots,4$,
\[
\langle \widetilde{\psi}_{n,j}, \opn(\overline{a})\widetilde{\psi}_{n,j}\rangle = \frac{1}{4}\sum_{q=1}^4 \langle a(\cdot+v_q)\rangle \,.
\]

This example shows that $\langle \psi_u^{(N)},\opn(\overline{a})\psi_u^{(N)}\rangle$ in general depends on the choice of the basis, even for simple regular graphs, and it may or may not be the uniform average. In fact, this gives the uniform average for the basis $\widetilde{\psi}_{n,j}$, but not for $\psi_{n,j}$, take for example the observable
\[
a(i+v_1)=a(i+v_3)=-1 \,, \qquad a(i+v_2)=a(i+v_4)=1\,,
\]
where we parametrized the vertices of the cylinder $\Z\mathop\square C_4$ by $u = i+v_q$, where $i\in \Z$ is the layer's level and $v_q\in C_4= (v_1,v_2,v_3,v_4)$ are the vertices within it.

The problem with $\psi_{n,j}$ is that it is concentrated on half the cylinder for $j=2,3$, while $\widetilde{\psi}_{n,j}$ is spread on the whole. The semi-delocalization of $\psi_{n,j}$ is not detected by locally constant observables.

\subsection{Boxes again}\label{sec:boxes}
Back to Figure~\ref{fig:box}, let us show that the graph becomes quantum ergodic once we add a potential $(Q,-Q)$ which is copied across the layers, for any $Q>0$. 

In fact, in this case we get the Floquet eigenvalues
\[
E_\pm(\theta_\fb) = 2\cos 2\pi\theta \pm \sqrt{(1+2\cos 2\pi\theta)^2+Q^2}\,.
\]
We now use the idea in \S\ref{sec:1d}: if for some $s,w$ we have $E_s(\theta_\fb+\alpha_\fb)=E_w(\theta_\fb)$, then $\lambda=E_s(\theta_\fb+\alpha_\fb)$ solves both the characteristic polynomial of $H(\theta_\fb)$ and $H(\theta_\fb+\alpha_\fb)$. Denote $c_\theta:=2\cos 2\pi\theta$. It follows that for such $\lambda$,
\[
\lambda^2-2c_{\theta+\alpha}\lambda-(1+Q^2+2c_{\theta+\alpha}) = \lambda^2-2c_\theta\lambda-(1+Q^2+2c_\theta)
\]
In turn, this implies
\[
(c_{\theta+\alpha}-c_\theta)(\lambda+1)=0\,.
\]
So either $\lambda=-1$ or $c_{\theta+\alpha}-c_\theta=0$. The case $\lambda=-1$ never happens. In fact, if $\lambda = c+\sqrt{(1+c)^2+Q^2}$, then one can easily show that there is an $M$ such that $\lambda \ge -1+\frac{Q}{M}>-1$. Similarly, if $\lambda = c-\sqrt{(1+c)^2+Q^2}$, then we can find $M$ such that $\lambda\le -1-\frac{Q}{M}<-1$. 

Thus, the only way the Floquet assumption can be violated is if $c_{\theta+\alpha}-c_\theta=0$. Clearly, for a given nonzero $\alpha$, only $\theta=\frac{1-\alpha}{2}$ is possible. In particular, \eqref{e:eigenass} is satisfied.

\section{Complementary results}\label{sec:comp}

\subsection{QUE and eigenvector correlators}\label{sec:quecor}
\subsubsection{Quantum unique ergodicity}
We first investigate QUE for $\cA_\Z$ and $\cA_{\Z^2}$.

For $\Gamma =\Z$, taking $\LL_N$ with periodic conditions amounts to considering $N$-cycles. On $C_{4N}$, consider the observable $a_N = (1,0,1,0,\dots,1,0)$ and the eigenvector $v^{(N)}=\frac{1}{\sqrt{2N}}(0,1,0,-1,\dots,0,1,0,-1)$ with eigenvalue $0$, where the string $(0,1,0,-1)$ is repeated $N$ times. Then $\langle v^{(N)},a_N v^{(N)}\rangle=0$ while $\langle a_N\rangle=\frac{1}{2}$, so \eqref{e:que} is violated.

On $\Z^2$, the whole sequence may converge to a nonzero limit. If $e_\ell^{(N)}(k) = \frac{1}{N}\ee^{2\pi\ii k\cdot \ell/N}$, take $\phi_{(\ell_1,\ell_1)}= e_{(\ell_1,\ell_1)}$ and $\phi_{(\ell_1,\ell_2)} = \frac{1}{\sqrt{2}} e_{(\ell_1,\ell_2)}^{(N)} + \sgn(\ell_1-\ell_2)\frac{1}{\sqrt{2}} e_{(\ell_2,\ell_1)}^{(N)}$ if $\ell_1\neq \ell_2$. This gives an orthonormal eigenbasis with $|\phi_{(\ell_1,\ell_2)}^{(N)}(n)|^2 = \frac{1\pm\cos 2\pi[(\ell_1-\ell_2)(n_1-n_2)/N]}{N^2}$ if $\ell_1\neq \ell_2$. So $\langle \phi_{(\ell_1,\ell_2)},a_N\phi_{(\ell_1,\ell_2)}\rangle = \langle a_N\rangle \pm \frac{1}{N^2}\sum_n a_N(n)\cos 2\pi\frac{(\ell_1-\ell_2)(n_1-n_2)}{N}$. If $a_N(n) = f(n/N)$, we thus get
\[
\langle \phi_{(\ell_1,\ell_2)}^{(N)}, a_N \phi_{(\ell_1,\ell_2)}^{(N)}\rangle -\langle a_N\rangle \to \pm \int_{[0,1]^2} f(x,y)\cos 2\pi(\ell_1-\ell_2)(x-y)\,\dd x \dd y\,.
\]
This is nonzero for $f(x,y) = \cos 2\pi(\ell_1-\ell_2)(x-y)$. 

\subsubsection{No correlator universality}
We next consider the question of matrix observables. 

On $\Z^2$, consider standard basis $(e_\ell^{(N)})_{\ell\in \LL_N^2}$ and the basis $(\phi_\ell^{(N)})_{\ell\in\LL_N^2}$ defined in the previous paragraph. Consider
\[
K_N(n,m)=\begin{cases} 1& \text{if } n-m=(\pm 1,0),\\ 0 &\text{otherwise.} \end{cases}
\]

Then $\langle e_\ell^{(N)},K e_\ell^{(N)}\rangle = 2\cos (\frac{2\pi\ell_1}{N})$, so $\frac{1}{N^2}\sum\limits_{\ell\in\LL_N^2} |\langle e_\ell^{(N)},K e_\ell^{(N)}\rangle|^2 \to \int_0^1 4\cos^2(2\pi x)\,\dd x=2$. 

On the other hand, $\langle \phi_\ell^{(N)}, K\phi_\ell^{(N)}\rangle= \cos (\frac{2\pi \ell_1}{N}) + \cos(\frac{2\pi \ell_2}{N})$, so $\frac{1}{N^2}\sum\limits_{\ell\in\LL_N^2} |\langle \phi_\ell^{(N)},K\phi_\ell^{(N)}\rangle|^2 \to \int_{[0,1]^2} \cos^2(2\pi x_1)+\cos^2(2\pi x_2)+ 2\cos(2\pi x_1)\cos(2\pi x_2)\,\dd x = 1$.

This implies there can be no quantity $\langle K_N\rangle_{\lambda_j^{(N)}}$ independent of the basis such that $\frac{1}{N^2}\sum_j |\langle \psi_j^{(N)},K_N\psi_j^{(N)}\rangle - \langle K_N\rangle_{\lambda_j^{(N)}}|^2 \to 0$.

\subsubsection{Matrix generalization}
We finally sketch how to generalize quantum ergodicity to matrix observables $K$. For simplicity we only discuss the case $\nu=1$. We may assume $V_f = \{0\}$ up to translating coordinates. Here, $H(\theta_\fb)=E(\theta_\fb)$.

For Step 1, we note that
\[
(\ee^{\ii tH_N} K\ee^{-\ii tH_N}\psi)(k_\fa) = \sum_{r\in\LL_N^d} \ee^{\ii t E(\frac{r_\fb}{N})}(UK\ee^{-\ii tH_N}\psi)_re_r^{(N)}(k)\,.
\]
Here, $(UK\ee^{-\ii tH_N}\psi)_r = \frac{1}{N^{d/2}}\sum_{n\in\LL_N^d}\ee^{-\frac{\ii r_\fb}{N}\cdot n_\fa}(K\ee^{-\ii tH_N}\psi)(n_\fa)$. If $R$ is the band width, then $(K\ee^{-\ii tH_N}\psi)(n_\fa) = \sum_{|\tau|\le R} K(n_\fa,n_\fa+\tau_\fa)(\ee^{-\ii tH_N}\psi)(n_\fa+\tau_\fa)$. Denote $K^\tau(n_\fa):= K(n_\fa,n_\fa+\tau_\fa)$. Next, expand $K^\tau(n_\fa) = \frac{1}{N^{d/2}}\sum_{m\in \LL_N^d} K_m^\tau \ee^{\frac{\ii m_\fb\cdot n_\fa}{N}}$, where $K_m^\tau = \langle e_m^{(N)},K^\tau(\cdot_\fa)\rangle_{\ell^2(\LL_N^d)}$. Then we obtain
\begin{align*}
(UK\ee^{-\ii tH_N}\psi)_r &= \frac{1}{N^{d}}\sum_{n,m\in\LL_N^d}\sum_{|\tau|\le R}\ee^{-\frac{\ii(r_\fb-m_\fb)\cdot n_\fa}{N}} K_m^\tau(\ee^{\ii tH_N}\psi)(n_\fa+\tau_\fa)\\
&= \frac{1}{N^{d/2}} \sum_{m\in\LL_N^d} \sum_{|\tau|\le R} K_m^\tau \ee^{\frac{\ii (r_\fb-m_\fb)\cdot\tau_\fa}{N}}(U\ee^{-\ii tH_N}\psi)_{r-m} \,.
\end{align*}

From here, we proceed as before, replacing $a_m^{(N)}(v_q)$ by $\sum_{|\tau|\le R} K_m^\tau \ee^{\frac{\ii r_\fb \cdot\tau_\fa}{N}}$. There are of course many simplifications because $\nu=1$. In the end, $\overline{a}$ is replaced by
\[
\overline{K} = \sum_{|\tau|\le R} K_0^\tau \ee^{\frac{\ii r_\fb \cdot\tau_\fa}{N}} e_0^{(N)}(k) = \sum_{|\tau|\le R} \langle K^\tau\rangle \ee^{\frac{\ii r_\fb \cdot\tau_\fa}{N}} \,,
\]
where $\langle K^\tau\rangle = \frac{1}{N^d}\sum_{n\in\LL_N^d} K(n,n+\tau)$. Hence,
\begin{align*}
\langle \psi, \opn(\overline{K})\psi\rangle &= \sum_{k\in\LL_N^d}\overline{\psi(k_\fa)}\sum_{r\in\LL_N^d}(U\psi)_r\sum_{|\tau|\le R} \langle K^\tau\rangle \ee^{\frac{\ii r_\fb \cdot\tau_\fa}{N}} e_r^{(N)}(k)\\
&=\sum_{|\tau|\le R}\langle K^\tau\rangle\sum_{k\in\LL_N^d} \overline{\psi(k_\fa)}\psi(k_\fa+\tau_\fa) = \sum_{|\tau|\le R} \langle K^\tau\rangle \langle\psi,\psi(\cdot+\tau_\fa)\rangle\,.
\end{align*}
This is the same expression we stated in \S\ref{sec:eic}. Interestingly, by examining the proof, we see that $R$ can be taken to increase with $N$, like $R\lesssim N^\delta$ with $\delta<\frac{1}{2d}$.

\subsection{Bloch's theorem}\label{sec:bloch}
We prove here a version of the Bloch theorem for discrete periodic operators. This result is well-known in the continuum, but doesn't seem to have been explored in our setting. We also comment on the corresponding eigenfunction average.

\begin{thm}
Let $H$ be a periodic Schr\"odinger operator over the infinite periodic graph $\Gamma$, and suppose $\lambda\in \sigma(H)$. Then we may find $\Psi_{\lambda}$ on $\Gamma$ such that $H\Psi_{\lambda}=\lambda\Psi_{\lambda}$ and $\Psi_{\lambda}(k_{\mathfrak{a}}+v_n) = \ee^{\ii\theta_{\mathfrak{b}}\cdot k_{\mathfrak{a}}}f(v_n)$, for some $\theta\in [0,1)^d$ and $f$ on $V_f$.

Similarly, if $\lambda\in \sigma(H_{N})$, we may find $\Psi_{\lambda}$ on $\Gamma_N$ such that $H_N\Psi_\lambda= \lambda\Psi_\lambda$ and $\Psi_\lambda(k_\fa+v_n)=\ee^{\ii \frac{j_\fb\cdot k_\fa}{N}} f(v_n)$, for some $j\in \LL_N^d$ and $f$ on $V_f$.
\end{thm}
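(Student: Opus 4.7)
My plan is to reduce both statements directly to the Floquet decomposition already set up in the paper. The key observation is that a function of the form $\Psi(k_\fa+v_n)=\ee^{\ii\theta_\fb\cdot k_\fa}f(v_n)$ is an eigenfunction of $H$ at eigenvalue $\lambda$ if and only if $f$ is an eigenvector of the fiber matrix $H(\theta_\fb)$ at eigenvalue $\lambda$. Thus finding the $\Psi_\lambda$ reduces to locating the correct fiber.

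\smallskip

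\textbf{Step 1 (the computation).} Given $\theta\in\R^d$ and $f:V_f\to\C$, define
\[
\Psi(k_\fa+v_n)=\ee^{\ii\theta_\fb\cdot k_\fa}f(v_n), \qquad k\in\Z^d,\ v_n\in V_f.
\]
Using translation invariance of $\Gamma$, the neighbors of $k_\fa+v_n$ are exactly the vertices $k_\fa+u$ with $u\sim v_n$ in $\Gamma$; writing $u=\lfloor u\rfloor_\fa+\{u\}_\fa$ gives
\begin{align*}
(H\Psi)(k_\fa+v_n)&=\sum_{u\sim v_n}\ee^{\ii\theta_\fb\cdot(k_\fa+\lfloor u\rfloor_\fa)}f(\{u\}_\fa)+Q(v_n)\ee^{\ii\theta_\fb\cdot k_\fa}f(v_n)\\
&=\ee^{\ii\theta_\fb\cdot k_\fa}\bigl(H(\theta_\fb)f\bigr)(v_n),
\end{align*}
by the definition \eqref{e:htheta} of $H(\theta_\fb)$ and periodicity of $Q$. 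Hence $H\Psi=\lambda\Psi$ iff $H(\theta_\fb)f=\lambda f$.

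\smallskip

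\textbf{Step 2 (infinite case).} The unitary $U$ of \eqref{e:u} admits a natural continuous analogue $\mathcal{U}:\ell^2(\Gamma)\to\int^{\oplus}_{[0,1)^d}\ell^2(V_f)\,\dd\theta$, defined by the same Fourier formula with $j/N$ replaced by $\theta$, and a completely parallel computation yields $\mathcal{U}H\mathcal{U}^{-1}=\int^{\oplus}_{[0,1)^d}H(\theta_\fb)\,\dd\theta$. Consequently
\[
\sigma(H)=\overline{\bigcup_{\theta\in[0,1)^d}\sigma(H(\theta_\fb))}=\overline{\{E_s(\theta_\fb):s\in\{1,\dots,\nu\},\ \theta\in[0,1)^d\}}.
\]
Since each $E_s$ is continuous in $\theta$ (in fact analytic off a small exceptional set, cf. \cite{Kato}), every $\lambda\in\sigma(H)$ is realized as $\lambda=E_s(\theta_\fb)$ for some $s$ and $\theta$. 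Picking a unit eigenvector $f$ of $H(\theta_\fb)$ for $\lambda$ and applying Step~1 produces the desired Bloch function $\Psi_\lambda$.

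\smallskip

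\textbf{Step 3 (finite case).} Here everything is already present in the paper. By \eqref{e:uhu-},
\[
\sigma(H_N)=\bigcup_{j\in\LL_N^d}\sigma\bigl(H(j_\fb/N)\bigr),
\]
so every $\lambda\in\sigma(H_N)$ equals $E_s(j_\fb/N)$ for some $s$ and some $j\in\LL_N^d$. Taking the corresponding fiber eigenvector $f$ and setting $\Psi_\lambda(k_\fa+v_n)=\ee^{\ii (j_\fb/N)\cdot k_\fa}f(v_n)$, Step~1 gives $H_N\Psi_\lambda=\lambda\Psi_\lambda$ \emph{before} imposing boundary conditions. To see that $\Psi_\lambda$ is compatible with periodic boundary conditions, note that
\[
\Psi_\lambda\bigl((k+Nm)_\fa+v_n\bigr)=\ee^{\ii\frac{j_\fb\cdot k_\fa}{N}}\ee^{\ii j_\fb\cdot m_\fa}f(v_n),
\]
and $j_\fb\cdot m_\fa=2\pi j\cdot m\in 2\pi\Z$ by the duality relation $\fa_i\cdot\fb_\ell=2\pi\delta_{i\ell}$, so the extra phase is trivial.

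\smallskip

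\textbf{Obstacles.} The essential content is the Floquet/direct integral decomposition in Step~2; once that is in place, the rest is an elementary verification. The decomposition itself is straightforward for bounded, $\Z^d_\fa$-periodic operators on the locally finite graph $\Gamma$, following the same proof as in \cite[\S 3.2]{BoSa}; the only mildly technical point is that for $\lambda$ in the continuous spectrum, $\Psi_\lambda$ is only a bounded generalized eigenfunction (not in $\ell^2(\Gamma)$), which is what the theorem statement already asks for.
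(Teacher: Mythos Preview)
Your proof is correct and follows essentially the same approach as the paper: both reduce to the Floquet/direct-integral decomposition, pick a fiber $\theta$ (resp.\ $j/N$) where $\lambda$ is realized as a Floquet eigenvalue, and then verify directly that the plane-wave ansatz $\ee^{\ii\theta_\fb\cdot k_\fa}f(v_n)$ is an eigenfunction via the computation in your Step~1. The paper invokes \cite{BoSa,KorSa} for the fact that each band $\sigma_n=E_n(\cC_\fb)$ is an interval (so $\lambda$ is actually attained), whereas you argue this via continuity of the ordered eigenvalues and compactness of the torus; your extra verification of compatibility with periodic boundary conditions in Step~3 is a nice touch that the paper leaves implicit.
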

\begin{proof}
$H$ is unitarily equivalent to $\int_{\cC_{\mathfrak{b}}}H(\theta_{\mathfrak{b}})\,\dd\rho_\star$, so $\sigma(H) = \cup_{n=1}^\nu \sigma_n$, where $\sigma_n = E_n(\cC_{\mathfrak{b}})=[E_n^-,E_n^+] $, see \cite{BoSa,KorSa}. Hence, $\lambda\in \sigma(H)$ implies $\lambda = E_r(\theta_{\mathfrak{b}})$ for some $r$ and $\theta_{\mathfrak{b}}\in \cC_{\mathfrak{b}}$. Let $\psi_r^{\theta_{\mathfrak{b}}}$ be the corresponding eigenvector on $V_f$ and define $\Psi_{\lambda}(k_{\mathfrak{a}}+v_n):=\ee^{\ii\theta_{\mathfrak{b}}\cdot k_{\mathfrak{a}}}\psi_r^{\theta_{\mathfrak{b}}}(v_n)$. Then
\begin{align*}
H\Psi_{\lambda}(k_{\mathfrak{a}}+v_n) &= \sum_{u\sim k_{\mathfrak{a}}+v_n}\Psi_{\lambda}(u) + Q(v_n)\Psi_{\lambda}(k_{\mathfrak{a}}+v_n) = \sum_{w\sim v_n} \Psi_{\lambda}(w+k_{\mathfrak{a}})+Q(v_n)\Psi_{\lambda}(k_{\mathfrak{a}}+v_n)\\
&=\sum_{w\sim v_n} \Psi_{\lambda}(k_{\mathfrak{a}}+\lfloor w\rfloor_{\mathfrak{a}}+\{w\}_{\mathfrak{a}}) + Q(v_n)\Psi_{\lambda}(k_{\mathfrak{a}}+v_n) \\
&= \ee^{\ii\theta_{\mathfrak{b}}\cdot k_{\mathfrak{a}}}\Big(\sum_{w\sim v_n}\ee^{\ii\theta_{\mathfrak{b}}\cdot \lfloor w\rfloor_{\mathfrak{a}}}\psi_r^{\theta_{\mathfrak{b}}}(\{w\}_{\mathfrak{a}}) + Q(v_n) \psi_r^{\theta_{\mathfrak{b}}}(v_n)\Big) \\
&= \ee^{\ii\theta_{\mathfrak{b}}\cdot k_{\mathfrak{a}}} (H(\theta_{\mathfrak{b}})\psi_r^{\theta_{\mathfrak{b}}})(v_n) = \ee^{\ii\theta_{\mathfrak{b}}\cdot k_{\mathfrak{a}}} E_r(\theta_{\mathfrak{b}})\psi_r^{\theta_{\mathfrak{b}}}(v_n) = \lambda\Psi_{\lambda}(k_{\mathfrak{a}}+v_n)\,.
\end{align*}

The case of $\Gamma_N$ is the same since $H_N \equiv \mathop\oplus_{j\in\LL_N^d} H(\frac{j_\fb}{N})$.
\end{proof}

Note that on $\Gamma_N$, we have $\|\Psi_\lambda\|^2 = \sum_{k\in\LL_N^d}\sum_{n=1}^\nu |f(v_n)|^2 = N^d \|f\|_{\C^\nu}^2$. If $\widetilde{\Psi}_\lambda = \frac{1}{\|\Psi_\lambda\|}\Psi_\lambda$, then $\langle \widetilde{\Psi}_\lambda, a\widetilde{\Psi}_\lambda\rangle = \frac{1}{N^d\|f\|_{\C^\nu}^2}\sum_{k\in \LL_N^d}\sum_{n=1}^\nu | f(v_n)|^2 a(k_\fa+v_n) = \sum_{n=1}^\nu \langle a(\cdot +v_n)\rangle \frac{|f(v_n)|^2}{\|f\|_{\C^\nu}^2}$. This average is in general not uniform unless $a$ is locally constant. This is in accord with Theorem~\ref{thm:maingen}.

\begin{rem}
Note that these Bloch functions exist even in case of flat bands. For example, in Figure~\ref{fig:box}, instead of considering the localized functions $(\dots,0,0,\binom{1}{-1},0,0,\dots)$, one can consider the Bloch function $\ee^{2\pi\ii k\cdot \theta}\binom{1}{-1}$, where $k\in \Z$ is the position. We see that this delocalized function is also an eigenvector with the same eigenvalue $-1$. 

This shows the limitations of this theorem; while there always exist an eigenfunction with periodic modulus (hence well distributed over the crystal and delocalized), there can also exist a lot of localized eigenfunctions for the same energy, which is the phenomenon that quantum ergodicity investigates.
\end{rem}

\subsection{Bloch varieties and assumption \eqref{e:eigenass}}\label{sec:irred}
Let $p(\theta;\lambda)$ be the characteristic polynomial of $H(\theta_\fb)$. Let $z_j = \ee^{2\pi\ii \theta_j}$. By definition \eqref{e:htheta}, we see that $p(z;\lambda)$ is a Laurent polynomial in $z$ and polynomial in $\lambda$.

We say that $p$ is \emph{irreducible} if the only way to write it as a product of two Laurent polynomials $p(z;\lambda)=f(z;\lambda)g(z;\lambda)$ is to take $f$ or $g$ to be a Laurent monomial $Cz_1^{a_1}\cdots z_d^{a_d}$, for some $a_j\in \Z$. 

The important point in the previous definition is that the factors $f,g$ should be Laurent polynomials of $(z;\lambda)$. For example, as we saw in \eqref{e:charpol1d}, for Schr\"odinger operators with a periodic potential on $\Z$, we have $p(z;\lambda)=\Delta(\lambda)-z-z^{-1}$. In this case, studying irreducibility is equivalent to considering the polynomial
\begin{equation}\label{e:poly1d}
z^2-z\Delta(\lambda)+1\,.
\end{equation}
In principle one can always write this as a product $(z-g_1(\lambda))(z-g_2(\lambda))$. However, \eqref{e:poly1d} is actually regarded as irreducible here because $g_i(\lambda)$ are not polynomials of $\lambda$, cf. \cite[p.19]{GKT}.

If a flat band $E_r(\theta_\fb)\equiv c$ exists, then the characteristic polynomial is reducible, since we then have $p(z;\lambda) = (\lambda-c)g(z;\lambda)$ for some Laurent polynomial $g(z;\lambda)$. Thus, irreducibility implies pure ac spectrum.

Irreducibility entails that the Bloch variety of $H$,
\[
B_H = \{(\theta,\lambda)\in \C^{d+1}: p(z;\lambda)=0\}
\]
cannot be written as the union of two proper analytic subsets, except for periodicity. That is, if $\Omega_1$ and $\Omega_2$ are two components of $B_H$, then $\Omega_2=\Omega_1+(k,0)$ for some $k\in \Z^d$.

\smallskip
	
Now, let us write
\[
p(z;\lambda) = (-1)^\nu\prod_{m=1}^K p_m(z;\lambda)
\]
for some irreducible Laurent polynomials $p_m(z;\lambda)$. It is proved in \cite{Wen2} that if for all nonzero $\alpha\in[0,1)^d$ and all $m_1,m_2$,
\begin{equation}\label{e:wencai}
p_{m_1}(z;\lambda) \not\equiv p_{m_2}(\zeta z;\lambda)
\end{equation}
as Laurent polynomials, where $\zeta=(\ee^{2\pi\ii\alpha_1},\dots,\ee^{2\pi\ii\alpha_d})$ and $\zeta z:=(\zeta_1 z_1,\dots,\zeta_d z_d)$, then \eqref{e:eigenass} is satisfied. In particular, if $p(z;\lambda)$ is irreducible and for any $\zeta\neq 1$ with $|\zeta|=1$, we have $p(z;\lambda)\not\equiv p(z\zeta;\lambda)$ as polynomials, then \eqref{e:eigenass} is satisfied. This is a remarkable simplification as we now only need to study the condition for the characteristic polynomial, instead of the eigenvalues which may be difficult to compute or may have complicated root expressions. This is in fact how \eqref{e:eigenass} is established in \cite{Wen2}, using \cite{Wencai}.

For comparison, to establish the criterion in general, we can always argue as in \S\ref{sec:1d}, namely try to show that for \emph{fixed} $\lambda$, there are not too many $z$ such that $p(z,\lambda)=p(z\zeta,\lambda)$. In case of irreducibility however, we just need to show that $p(z,\lambda)\not \equiv p(z\zeta,\lambda)$ as polynomials. This can be done for example by comparing the coefficients of $\lambda^k$ for some $k$ and showing they can only be equal on a set of zero measure.

In particular, the Bloch variety for periodic Schr\"odinger operators on the triangular lattice and the EHM lattice is also irreducible \cite{FLM}, so one only needs to verify $p(z;\lambda)\not\equiv p(z\zeta;\lambda)$. The argument used in \cite{Wen2} applies to Schr\"odinger operators with a periodic potential on the triangular lattice, so they are quantum ergodic as well.

It should be noted that irreducibility is not necessary for \eqref{e:eigenass} to hold. For example, the infinite ladder \S\ref{sec:lad} has characteristic polynomial $(z+z^{-1}-\lambda)^2-1 = (z+z^{-1}+1-\lambda)(z+z^{-1}-1-\lambda)$, hence reducible. Still, \eqref{e:eigenass} is satisfied.

Even when the characteristic polynomial is reducible, criterion \eqref{e:wencai} applies, and it can be much simpler to check than \eqref{e:eigenass} directly.\footnote{Note that we only discussed the (ir)reducibility of the \emph{Bloch} variety here. The irreducibility of the \emph{Fermi} variety, where $\lambda$ is fixed, is significantly harder to prove \cite{Wencai}, but we do not need it.}

\subsection*{Acknowledgments}
M.S. is very thankful to Wencai Liu for discussions concerning his results \cite{FLM,Wencai,Wen2} on irreducibility of Bloch varieties. 

\providecommand{\bysame}{\leavevmode\hbox to3em{\hrulefill}\thinspace}
\providecommand{\MR}{\relax\ifhmode\unskip\space\fi MR }
\providecommand{\MRhref}[2]{%
  \href{http://www.ams.org/mathscinet-getitem?mr=#1}{#2}
}
\providecommand{\href}[2]{#2}

\end{document}